\tikzset{> = {Straight Barb[scale=.75]}}
\numberwithin{equation}{section}
\let\cite=\cites
\newtheorem{theorem}{Theorem}
\newtheorem{proposition}[theorem]{Proposition}
\newtheorem*{remark}{Remark}
\newcommand{\FPQ}[3]{ 
	\,{}_{3}F_{2}
	\left( 
		\left. \genfrac{}{}{0pt}{}{ #1 }{ #2 } \right| #3
	\right)
}
\newcommand{\xct}{\tilde{x}_{\mathrm{c}}}
\newcommand{\xc}{x_{\mathrm{c}}}
\begin{document}

\vspace*{2cm}

\title[Arctic curves of the four-vertex model]{Arctic curves of the
  four-vertex model}

\author{I. N. Burenev} \address{Steklov Mathematical Institute,
  Fontanka 27, 191023 Saint Petersburg, Russia}
\email{inburenev@gmail.com}

\author{F. Colomo} \address{INFN, Sezione di Firenze  Via G. Sansone
  1, I-50019 Sesto Fiorentino (FI), Italy} \email{colomo@fi.infn.it}

\author{A. Maroncelli} \address{Dipartimento di Fisica e Astronomia,
  Universit\`a di Firenze,  and INFN, Sezione di Firenze, Via G. Sansone
  1, I-50019 Sesto Fiorentino (FI), Italy}
\email{andrea.maroncelli@unifi.it}

\author{A. G. Pronko}
\address{Steklov Mathematical Institute, 
  Fontanka 27, 191023 Saint Petersburg, Russia,  and
  Theoretical Physics Department, Saint Petersburg State University,
Ulyanovskaya str. 1, Peterhof, Saint Petersburg, 198504, Russia}
\email{agp@pdmi.ras.ru}

\begin{abstract}
We consider the four-vertex model with a special choice of fixed
boundary conditions giving rise to limit shape phenomena.  More
generally, the considered boundary conditions relate vertex models to
scalar products of off-shell Bethe states, boxed plane partitions, and
fishnet diagrams in quantum field theory.  In the scaling limit, the
model exhibits the emergence of an arctic curve separating a central
disordered region from six frozen `corners' of ferroelectric or
anti-ferroelectric type.  We determine the analytic expression of the
interface by means of the Tangent Method.  We supplement this
heuristic method with an alternative, rigorous derivation of the
arctic curve. This is based on the exact evaluation of suitable
correlation functions, devised to detect spatial transition from order
to disorder, in terms of the partition function of some discrete
log-gas associated to the orthogonalizing measure of the Hahn
polynomials.  As a by-product, we also deduce that the arctic curve's
fluctuations are governed by the Tracy-Widom distribution.

\end{abstract}

\maketitle

\tableofcontents


\section{Introduction}

Limit shape phenomena have been addressed for the first time in the
context of representation theory of the symmetric group
\cite{KV-77,LS-77}. Subsequently, it has been recognized independently
that these phenomena can be observed in random walks, random tilings,
and more generally, in many problems which admit statistical mechanics
formulation \cite{F-84,EKLP-92,KZj-00}. Much information, mostly
qualitative, can be extracted by means of numerical simulations; this
has stimulated the development of various approaches based on
Monte-Carlo methods and Markov processes
\cite{PW-96,SZ-04,E-99,AR-05}.

More interesting, however, is to gain an exact analytical description
of these phenomena.  For those tiling problems which can be seen as
dimer models, such as, domino or lozenge tilings, various methods
based on the use of Gelfand triangles or non-intersecting lattice
paths can be used to obtain exact analytical expressions for systems
of finite size. Having these expressions, one can then analize their
behaviour in the thermodynamic (scaling) limit by means of, e.g., the
saddle-point method. In this way the Arctic circle theorems have been
established for domino tiling of large Aztec diamonds \cite{CEP-96}
and lozenge tilings of large hexagons \cite{CLP-98,BGR-10}.

In a rather general setup, the problem can be formulated as a
variational principle of minimization of certain functionals
\cite{D-98,CKP-01,Zj-02,PR-08}. The solution of the variational
principle essentially describes the limit shape (the arctic curve
being the boundary of its flat facets). The problem has been solved in
full generality for dimer models on periodic planar bipartite graphs
\cite{KOS-06,KO-05}. For a pedagogical and comprehensive treatment of
the case of the honeycomb lattice, see \cite{G-21}. For recent
progresses in relation to nonperiodic graphs, see \cite{BCdT-23}, and
reference therein. Concerning models that cannot be expressed in terms
of free fermions, the variational principle has been solved only for
the five-vertex model \cite{KP-21,KP-22,dGKW-21}, and for some
particular cases of the stochastic six-vertex model
\cite{BCG-16,RS-18}.

When dealing with more complicated models, for which no method is
currently available for the determination of the limit shape, a less
ambitious but still interesting goal is that of evaluating the arctic
curve. To this aim we may resort to the Tangent Method
\cite{CS-16}. The method has its origin in the study of the arctic
curve of the six-vertex model with domain wall boundary conditions
\cite{CP-09,CP-08,CPZj-10}.  Although somewhat heuristic, the method
is remarkably simple, especially in comparison to the variational
approach.  It is applicable to a wide class of models, provided that a
lattice path formulation is available, and that a suitable boundary
one-point correlation function may be computed. Recent advances are
related to the construction of a rigorous proof of the method
\cite{A-20,DGR-19} and to its extension to various domains
\cite{CPS-18,DG-19,DR-21} and models \cite{DDG-20,D-21,D-23}.

In the present paper, we consider the four-vertex model \cite{LPW-90}
with a particular choice of fixed, `scalar-product', boundary
conditions, so called because the corresponding partition function can
be expressed as a scalar product of off-shell Bethe states, see
\cite{B-08,BuP-21} for details. These are sometimes
called `boxed plane partitions' boundary conditions \cite{dGKW-21}.
They also appear in the context of fishnet diagrams in $\mathcal{N}=4$
supersymmetric Yang-Mills theory \cite{OV-22}.

We address the problem of determination of the curves separating the
various phases. Numerical simulations of the model clearly show the
existence of three phases: ferroelectric order, disorder, and
anti-ferroelectric order \cite{BoP-21}. We apply the Tangent Method to
derive the analytic expression of the various portions of the arctic
curve. In particular, we show how the Tangent Method can be adapted to
obtain those portions which separate anti-ferroelectric order from
disorder.

We supplement the above heuristic approach with an alternative,
rigorous derivation of the artic curve. This is based on the
evaluation of some suitable correlation functions devised to detect
spatial transition from order to disorder.  Specifically we consider
the Emptiness Formation Probability (EFP) to study the transition from
ferroelectric order to disorder \cite{CP-07b}, and introduce a similar
correlation function, the Anti-ferroelectric phase Formation
Probability (AFP) to tackle the transition from anti-ferroelectric
order to disorder. We shall refer generically to this approach as the
`EFP Method' \cite{CP-07a}.

In practice, by mapping the four-vertex model to the five-vertex model
at its free-fermion point, we manage to evaluate explicitly EFP and
AFP. In both cases, the obtained expressions are easily recognized as
the gap probabilities for some discrete log-gases associated to the
orthogonalizing measure of the Hahn polynomials. The study of the
asymptotic behaviour of the log-gas in the scaling limit allows to
work out the explicit expression of the arctic curve. The log-gas
description also implies that the local fluctuations of the arctic
curve are described by the Tracy-Widom
distribution \cite{TW-94a,TW-94b}.

\subsection{The model}\label{sec:model}
The four-vertex model is a special case of the six-vertex model
\cite{B-82}, in which two vertex configurations are deleted, that is,
their Boltzmann weights are set equal to zero \cite{LPW-90}. The
remaining vertex configurations, with their Boltzmann weights $a$,
$b$, $c$, are given in Fig.~\ref{fig:weights}, where we have chosen a
representation in terms of thick/thin, or full/empty edges. The
four-vertex model may thus be viewed in terms of non-intersecting
(oriented) lattice paths on a square lattice, with an additional
constraint forbidding two consecutive steps in the horizontal
direction.

\begin{figure}[t]
\includegraphics{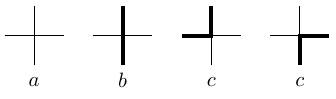}
\caption{The four vertex configurations of the model, with their
  Boltzmann weights.}\label{fig:weights}
\end{figure}

In the present paper we consider the four-vertex model on a
rectangular domain of size $L \times M$, with lattice coordinates
$(n,m)\in[1,L]\times[1,M]$, oriented in the standard way, that is
rightward and upward, respectively. The `scalar-product' boundary
conditions are fixed as follows: first (last) $N$ vertical edges of
the south (north) boundary are thick, with all other boundary edges
being thin, see Fig.~\ref{fig:bc}. The condition $M\geq L >N\geq 1$
guarantees that the model has more than just one admissible
configuration.

\begin{figure}[t]
\includegraphics{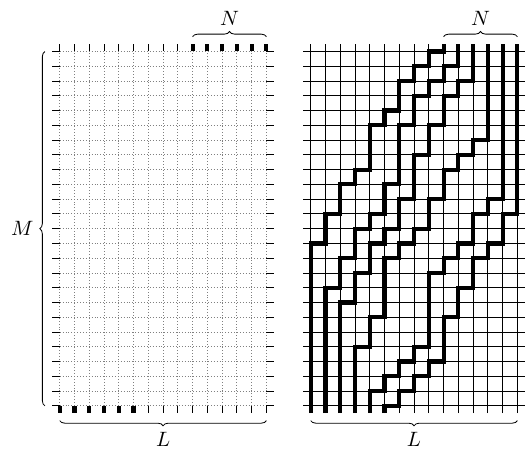}
\caption{Boundary conditions of the model (left) and a possible 
  configuration (right). Here $L=15$, $M=25$, and $N=6$.}
\label{fig:bc}
\end{figure} 

The partition function of the model is defined as
\begin{equation}
	Z_{L,M,N} (a,b,c) = \sum_{\text{conf}} a^{\# a} b^{\# b} c^{\# c}
\end{equation}
where the sum is taken over all possible configurations of the model
and $\#a$, $\#b$ and $\#c$ denote the number of vertices of type $a$,
$b$ and $c$, respectively.

A specific feature of the four-vertex model with fixed boundary
conditions is that the number of vertices of each type does not depend
on the configuration. In particular, in our case we have
\begin{equation}
	\# a = (L-N) (M-N), \quad
	\# b = N (M-L+N), \quad 	
	\# c = 2 N(L-N),
\end{equation} 
and thus 
\begin{equation}
	Z_{L,M,N}(a,b,c) = a^{(L-N)(M-N)} b^{N(M-L+N)} c^{2N(L-N)} Z_{L,M,N}
\end{equation}
where $Z_{L,M,N}$ is the number of allowed configurations. Therefore,
with no loss of generality, we set $a=b=c=1$. Correspondingly, the
Gibbs measure is uniform on the space of configurations of the model,
and equal to $1/Z_{L,M,N}$.

\subsection{Statement of the problem}\label{sec:numerics}

The four-vertex model may exhibit, under suitable choice of fixed
boundary conditions, the limit shape phenomenon.  In particular, in
the case of the scalar-product boundary conditions, in the scaling
limit one observes the emergence of a central disordered region and
six frozen regions: four of ferroelectric order (only $a$-, or only
$b$-vertices) and two of anti-ferroelectric order (only $c$-vertices,
alternating and forming a zig-zag pattern), see
Fig.~\ref{fig:conf_typical}.

\begin{figure}[t]
\includegraphics[width=.3\linewidth]{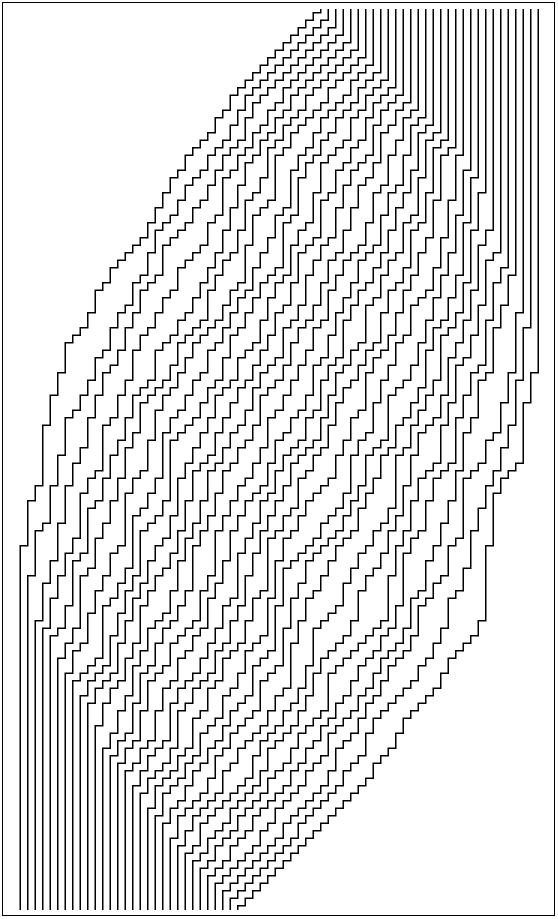}
\caption{A typical configuration of the model. The central disordered
  region and the six ordered ones, of type $a$, $b$, and $c$, are
  clearly visible. Here $N=30$, $L=70$, and
  $M=120$.}\label{fig:conf_typical}
\end{figure}

To further illustrate this phenomenon, we have performed the following
numerical experiment. We have generated $10^5$ configurations of the
model on the lattice with $N=60$, $L=140$, $M=240$. We have used the
Markov Chain Monte Carlo method with Coupling From The Past
\cite{PW-96}, to ensure uniform sampling.  In Fig.~\ref{fig:numerics}
we show the density of vertices of type $a$, $b$, and $c$,
respectively, averaged over $10^5$ configurations.

\begin{figure}[t]
\includegraphics[width=\linewidth]{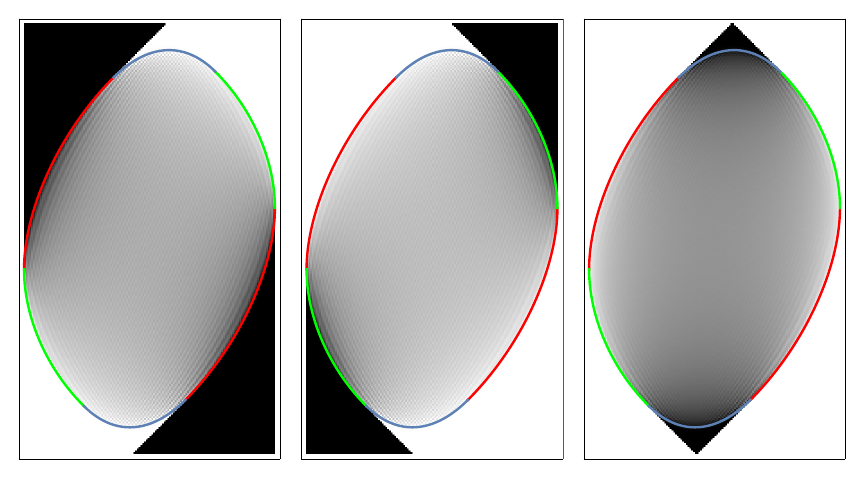}
\caption{Densities of vertices of type $a$, $b$, and $c$ (left to
  right), with color varying from white to black as densities vary
  from 0 to 1. The picture has been generated by averaging over $10^5$
  configurations of a lattice with $N=60$, $L=140$, $M=240$. The
  arctic curve, as given in Thm.~1, below, is also
  plotted.}\label{fig:numerics}
\end{figure}

The simulation clearly shows the emergence of the six frozen regions of
type $a$, $b$, and $c$, in black in the first, second, and third
picture, respectively. These are sharply separated from the central
disordered region by a smoth curve, known as the arctic curve.

The purpose of the present paper is twofold: on the one hand, to
provide an explicit analytic expression for the arctic curve of the
model, and on the other hand, to test and improve available methods to
make them more effective, and applicable to a broader range of
situations. In particular, below, we extend the Tangent Method to make
it applicable in situations where the arctic curve separates the
disordered region from an anti-ferroelectically frozen one, such as in
the top part of Fig.~\ref{fig:numerics}, right.  Also, we test the
still conjectural validity of the Tangent Method by providing an
alternative, rigorous derivation of our results.

\subsection{Main result}
Our main result is an explicit expression for the arctic curve of the
model. This emerges in the scaling limit, defined as follows. First
the sizes of the considered domain $L$, $M$, and $N$, and the lattice
coordinates $n$ and $m$, are rescaled by a parameter $\ell$, 
\begin{equation}\label{eq:scaling}
  L=\lceil\mathcal{L}\ell\rceil,\quad
     M=\lceil\mathcal{M}\ell\rceil,\quad
     N=\lceil\mathcal{N}\ell\rceil,\quad
     n=\lceil x\ell\rceil,\quad
     m=\lceil y\ell\rceil,
\end{equation}
and next the limit $\ell\to\infty$ is taken. The coordinates $x$ and
$y$ parametrize the rescaled domain, $(x,y)\in [0,\mathcal{L}]\times
[0,\mathcal{M}]$.

It appears that the arctic curve is made of six consecutive arcs,
joined end by end at six `contact points', $P_1,\dots,P_6$, see
Fig.~\ref{fig:AC}. Each arc is a specific portion of some algebraic
curve (actually, of some ellipse).  We denote these six arcs by
$\Gamma_j$, $j=1,\dots,6$, starting from the west contact point,
$P_1$, and in clockwise order. These arcs separate a central
disordered region from six different frozen regions of type $a$, $c$,
$b$, and again $a$, $c$, $b$, respectively.

\begin{figure}[t]
\includegraphics{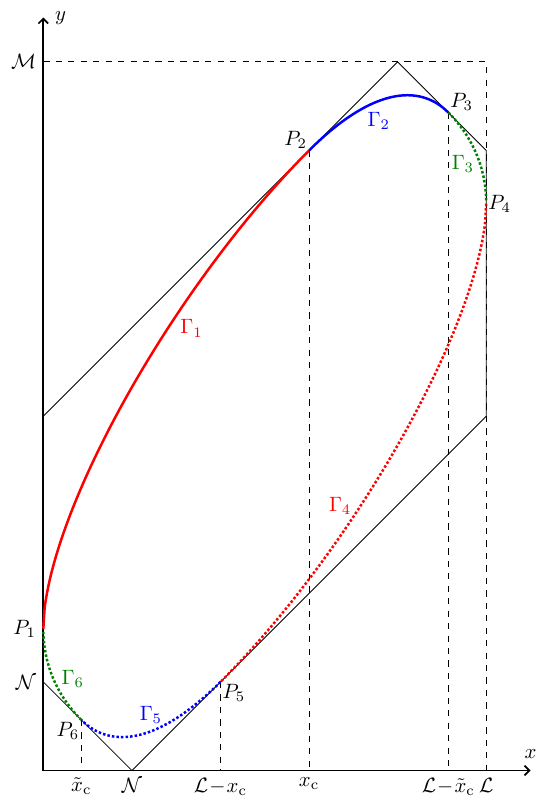}
\caption{Arctic curve for the four-vertex model, with its six contact
  points.  The solid portion is given by Thm.~\ref{thm:result}. The
  dotted portion may be restored from the symmetries of the
  model. Here $\mathcal{N:L:M}=1:5:8$.}\label{fig:AC}
\end{figure}

\begin{theorem}\label{thm:result} 
The portions $\Gamma_1$ and $\Gamma_2$ of the arctic curve of the
four-vertex model with the scalar-product boundary conditions are
described by the following equations:
\begin{equation}
  \left\{\begin{aligned}
  \Gamma_1:\quad & y=f_1(\mathcal{L},\mathcal{M},\mathcal{N};x), \quad
  && x\in(0,\xc] ,\\
  \Gamma_2:\quad&y=f_2(\mathcal{L},\mathcal{M},\mathcal{N};x),\quad
			&& x\in[\xc, \mathcal{L}-\xct],
	\end{aligned}
	\right.
\end{equation}
where
\begin{align}
	f_1(\mathcal{L},\mathcal{M},\mathcal{N};x) &= 
	\frac{	\mathcal{MN} (\mathcal{L}-2x)+ (\mathcal{M+N})\mathcal{L}x
        }{ \mathcal{L}^2 } 
        \\
        &\qquad \qquad
	 +2 \frac{ \sqrt{ \mathcal{MN(L-N)(M-L)} (\mathcal{L}-x)x } 
	   }
         {\mathcal{L}^2},\label{eq:ac-answer1}
         \\
         f_2(\mathcal{L},\mathcal{M},\mathcal{N};x) &= 
         (\mathcal{L-M-N}-x) + 2f_1(\mathcal{L},\mathcal{M},\mathcal{N};x),
         \label{eq:ac-answer2}
\end{align}
and
\begin{equation}\label{eq:xc}
\xc = \frac{(\mathcal{M}-\mathcal{L})(\mathcal{L} - \mathcal{N})}
{\mathcal{M}-\mathcal{L}+\mathcal{N}},\qquad
\xct=\frac{(\mathcal{M}-\mathcal{L})\mathcal{N}}{\mathcal{M-N}}.
\end{equation}
\end{theorem}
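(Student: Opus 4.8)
The plan is to combine a heuristic Tangent Method computation, which produces the curves quickly, with a rigorous derivation through the Emptiness Formation Probability that confirms them. The common starting point is the lattice-path description of Section~\ref{sec:model}: each configuration is a family of non-intersecting up-right paths subject to the no-two-consecutive-horizontal-steps constraint, with the path endpoints on the south and north boundaries fixed by the scalar-product condition. The essential enabling step is to map the four-vertex model onto a five-vertex model at its free-fermion point, after which the measure on paths becomes determinantal. In a domain of this shape, with the integer parameter $N$ entering the boundary data, the relevant orthogonal polynomials are the Hahn polynomials, and all one-point and gap observables can be written through their orthogonalizing measure and normalization constants.

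For the Tangent Method I would first compute the boundary one-point function that governs where an extremal path meets the north boundary; using the free-fermion/Hahn structure this is an explicit ratio of partition functions. Under the scaling \eqref{eq:scaling} this correlator concentrates, and a saddle-point analysis of the associated contour integral yields a rate function $S(\xi)$ whose stationary point fixes the most probable crossing site. One then forces that path to terminate at an external point beyond the boundary: in the frozen $a$-corner the path is a genuine straight segment, so the total weight factorizes into the boundary correlator at the entry site times the trivial weight of the straight segment, and maximizing over the entry site selects an optimal tangent line. Letting the external point vary produces a one-parameter family of lines whose envelope is the arc $\Gamma_1$; carrying out the elimination gives precisely the ellipse \eqref{eq:ac-answer1}, with the west contact point landing at $x=\xc$.

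The arc $\Gamma_2$ requires adapting the method to an anti-ferroelectrically frozen corner, where the paths do not run straight but execute the alternating zig-zag of $c$-vertices. The adaptation is to replace the EFP-type boundary correlator by the analogous Anti-ferroelectric phase Formation Probability and to use the coarse-grained direction of the zig-zag as the effective slope of the ``straight'' trajectory. The same envelope construction then applies, but the doubled elementary step of the zig-zag is what is expected to produce the affine relation \eqref{eq:ac-answer2}, $f_2=(\mathcal{L}-\mathcal{M}-\mathcal{N}-x)+2f_1$, and to fix the matching of $\Gamma_2$ to $\Gamma_1$ at $x=\xc$ and to the next arc at $x=\mathcal{L}-\xct$. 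To make all of this rigorous, and independent of the (still conjectural) Tangent Method, I would evaluate the EFP and AFP exactly via the free-fermion mapping, recognize each as a gap probability for the discrete log-gas attached to the Hahn weight, and determine in the scaling limit the edge of the support of the log-gas equilibrium measure as the observation window moves. This edge is an algebraic curve that must coincide with $f_1$ and $f_2$, and the standard soft-edge behaviour of such ensembles simultaneously yields the Tracy--Widom fluctuations. The main obstacle is the anti-ferroelectric arc $\Gamma_2$: one must correctly identify the coarse-grained path direction and the right correlator in the zig-zag phase, and then verify that the heuristic envelope agrees with the rigorous log-gas edge, since here the naive ferroelectric Tangent Method does not directly apply.
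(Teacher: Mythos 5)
Your overall strategy coincides with the paper's: a Tangent Method computation for the two arcs, supplemented by a rigorous derivation in which EFP and AFP are written as gap probabilities of the discrete Hahn log-gas and the arctic curve is read off from the right edge of the support of the equilibrium measure (the paper takes that edge from the known asymptotics of Hahn polynomials and obtains \eqref{eq:ac-answer1}, \eqref{eq:ac-answer2} by specializing the parameters of Prop.~\ref{prop:EFP} and Prop.~\ref{prop:AFP}). Your treatment of $\Gamma_1$ and your log-gas programme are essentially the paper's Sections~\ref{sec:TMmethod} and \ref{sec:EFPmethod}, with the one caveat that you do not mention the combinatorial reduction of AFP to an EFP on a transformed ($K\times M$, $K-N$-line) domain, which is the nontrivial step (Prop.~\ref{prop:EFP_AFP}) that makes the Hahn log-gas representation of AFP available.

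The genuine gap is in your Tangent Method argument for $\Gamma_2$. Taking ``the coarse-grained direction of the zig-zag as the effective slope of the straight trajectory'' does not give a workable tangency construction: in the anti-ferroelectric corner every path executes the \emph{same} deterministic zig-zag, so there is no fluctuating extremal path whose escape trajectory you could optimize over, and a ``doubled elementary step'' heuristic does not derive the affine relation \eqref{eq:ac-answer2} (note that the factor $2$ multiplies the whole of $f_1$, including the square root, and there is also the additive term $\mathcal{L}-\mathcal{M}-\mathcal{N}-x$; neither comes out of a step-doubling argument). What the paper actually does is first freeze a triangular extension of $c$-vertices on the northeast side, then insert a single defect $a$-vertex on the east boundary at height $M+N-2r$; interpreting $a$-vertices as a second species of particles, the defect propagates as the topmost \emph{thin} (hole) path, which is a genuine lattice path that travels straight through the frozen region. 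The Tangent Method is applied to that hole path, with the hexagonal-domain weight obtained from the boundary-refined partition function via the particle-hole symmetry \eqref{eq:phsym}, $Z^{(2)}_{L,M,N}(p)=Z^{(1)}_{L,M,L-N}(N-p+1)$; the envelope of the resulting family of lines is then computed exactly as for $\Gamma_1$. Since your rigorous AFP/log-gas route independently determines $\Gamma_2$, your proposal can still reach the theorem, but the heuristic half of your argument for the anti-ferroelectric arc would need to be replaced by this defect/particle-hole construction (or dropped in favour of the AFP computation alone).
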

\begin{remark}
Below, we provide two different derivations of this result. The first
one is based on the Tangent Method \cite{CS-16}, and its validity is
therefore conditioned to that of the main assumptions underlying the
method, see Sect.~\ref{sec:TMmethod} for details. The second
derivation, based on the study of some suitable correlation function,
is instead fully rigorous, see Sect.~\ref{sec:EFPmethod}.
\end{remark}

One can easily find the expression for the remaining portions of the
curve with the help of symmetry arguments. In particular, $\Gamma_3$
may be obtained from $\Gamma_1$ by means of the so-called
`particle-hole' symmetry, which implies, in the scaling limit, the
invariance of the model under the simultaneous replacement
$\mathcal{N}\to\mathcal{L-N}$, $x\to\mathcal{L}-x$, see
Sec.~\ref{sec:symmetries} for details. We have
\begin{equation}\label{eq:gamma3}
  \Gamma_3:\quad  y=
    f_1(\mathcal{L},\mathcal{M},\mathcal{L-N};\mathcal{L}-x),\qquad 
    x\in[\mathcal{L}-\xct,\mathcal{L}].
\end{equation}
Note in particular that $\xct$ in \eqref{eq:xc} is obtained from $\xc$
by replacing $\mathcal{N}\to\mathcal{L-N}$.  Also, the particle-hole
symmetry, when applied to $\Gamma_2$, implies that
\begin{equation}
  f_2(\mathcal{L},\mathcal{M},\mathcal{L-N};\mathcal{L}-x)
  =f_2(\mathcal{L},\mathcal{M},\mathcal{N};x),\qquad
  x\in[\xc,\mathcal{L}-\xct],
\end{equation}
as it may be easily verified.  Concerning the remaining three portions
of the arctic curve, $\Gamma_4,\Gamma_5,\Gamma_6$, these may be
obtained from $\Gamma_1,\Gamma_2,\Gamma_3$, respectively, by means of
the reflection symmetry, consisting in implementing simultaneously the
two changes of variables, $x\to \mathcal{L}-x$ and
$y\to\mathcal{M}-y$, see Sec.~\ref{sec:symmetries}.

\begin{remark}
It is easily seen that the arctic curve, although continuous
everywhere, together with its first derivative, does not correspond to
a single algebraic curve. Indeed it is only piecewise analytic, with
discontinuities in the second derivative at the six contact points.
The non-analiticity of the arctic curve reflects the fact that the
four-vertex model is not a free-fermion model.
\end{remark}

Concerning our main result, a few comments are in order.  First, we
note that there is a natural bijection between the four-vertex model
with scalar-product boundary conditions and non-intersecting lattice
paths, or lozenge tilings of a hexagon, see Sect.~\ref{sec:bijection}
for details.  Despite this bijection, the four-vertex model is
genuinely non free-fermionic, arising as a particular limit of the
five-vertex model \cite{B-08b,BuP-21}. Indeed the bijection is
somewhat `non-local', in the sense that different portions of the
lattice are deformed in different ways when going back and forth
between the four-vertex model and the non-intersecting lattice
paths. Also, investigation of the limit shape of the four-vertex model
appears beyond current capabilities of the variational approach
(however, see \cite{KP-22} for recent developments).

Second, in view of the above mentioned bijection, it is clear that one
could have deduced the expressions of the six portions of the arctic
curve from suitable transformations of those worked out in
\cite{CLP-99} for the lozenge tilings of a hexagon. It is
nevertheless worth providing an independent derivation, in particular
because the modified version of the Tangent Method proposed here may
be extended to other more sophisticated models, such as, for instance,
the five-vertex model, exhibiting phase separation with regions of
anti-ferroelectric order \cite{dGKW-21}.

Third, our results provide one additional example of the effectiveness
of the Tangent Method for evaluating the arctic curve of various
models of statistical mechanics that have a description in terms of
paths. 

Fourth, the alternative derivation of our main theorem, based on the
evaluation of the asymptotic behaviour of some `gap probability' of
the model, immediately implies, as a by-product, that the fluctuations
of the arctic curve, away from the contact points, are governed by the
Tracy-Widom distribution \cite{TW-94a,TW-94b}.  This follows directly
from the Fredholm determinant representations provided for the gap
probabilities, see \eqref{eq:FD1} and \eqref{eq:FD2} below, and
constitutes one more example in support of the universality of
Tracy-Widom distribution \cite{D-06}.

Finally, one could wonder about the fluctuations of configurations in
the vicinity of the contact points. It would be interesting to verify
whether these are actually described by the Gaussian Unitary Ensemble
corner process, in analogy with lozenge tilings and the ice-model
\cite{OR-06,GP-15,G-14}.

The paper is organized as follows. In Sec.~\ref{sec:properties} we
discuss some useful properties of the model. In
Sec.~\ref{sec:TMmethod} we derive the expression for the arctic curve,
Thm.~\ref{thm:result}, using the Tangent Method. In
Sec.~\ref{sec:EFPmethod} we provide an alternative derivation of the
theorem, using the EFP Method. Additional technical derivations are
given in three Appendices.

\section{Some properties of the model}\label{sec:properties}

\subsection{Reflection and particle-hole symmetries}\label{sec:symmetries}

Let us consider some symmetries of the set of configurations of the
four-vertex model with scalar-product boundary conditions. A first,
elementary symmetry of the model, is that under simultaneous
reflections with respect to the horizontal and vertical axes. In terms
of the lattice coordinates, this transformation reads
\begin{equation}
(n,m)\mapsto(L-n,M-m),
\end{equation}
see Fig.~\ref{fig:reflection}.  The symmetry under this transformation
implies relations for observables of the model and is clearly
preserved in the scaling limit. In particular, concerning the arctic
curve, this makes it possible to determine the portions $\Gamma_4$,
$\Gamma_5$, and $\Gamma_6$ of the arctic curve from the portions
$\Gamma_1$, $\Gamma_2$, and $\Gamma_3$, respectively.

\begin{figure}[t]
\includegraphics[width=.6\linewidth]{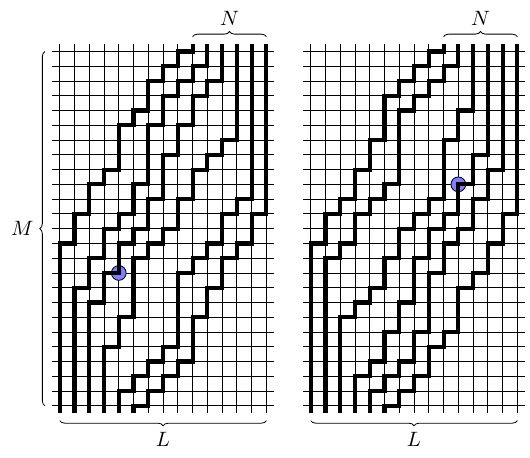}
\caption{A configuration of the four-vertex model (left); the
  resulting configuration after simultaneous reflection with respect
  to both horizontal and vertical axes (right). The blue dot denotes
  the vertex of coordinates $(n,m)$ and its image under the
  transformation. Here $L=15$, $M=25$, $N=6$, $n=5$ and
  $m=10$.}\label{fig:reflection}
\end{figure}

Another symmetry is the following.  If we swap the state (thick/thin)
of each vertical edge and perform a reflection of the model with
respect to an horizontal (or vertical) axis, we end up with a
four-vertex model with the same type of boundary conditions, but with
$L-N$ lines instead of $N$, see Fig.~\ref{fig:particle_hole}. This may
be recognized as the usual particle-hole duality.  The considered
transformation corresponds to the following mapping of the lattice
parameters and coordinates:
\begin{equation}\label{eq:phsym}
L,M,N\mapsto L,M,L-N,\qquad (n,m)\mapsto (L-n+1,m).
\end{equation}
In particular, it follows that the northeast portion, $\Gamma_3$, of
the arctic curve for the model with $N$ lines is readily obtained from
the reflection of the northwest portion, $\Gamma_1$, of the arctic
curve for the model with $L-N$ lines, see \eqref{eq:gamma3}.

\begin{figure}[t]
\includegraphics[width=.85\linewidth]{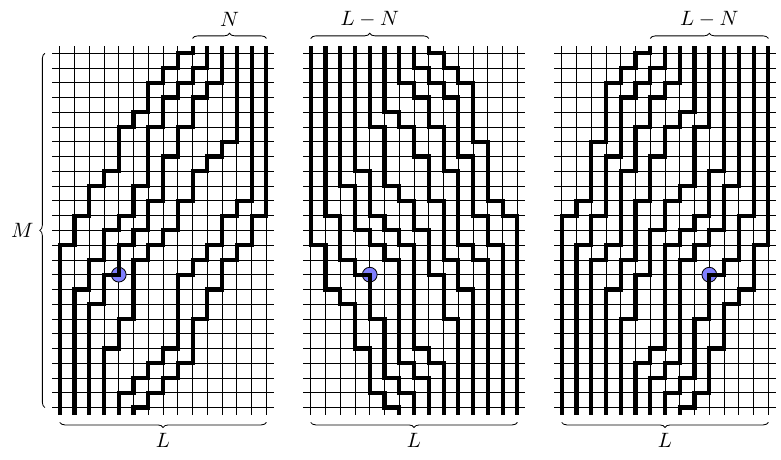}
\caption{A configuration of the four-vertex model (left);
the resulting configurations after swapping the state of all vertical
edges (center), and next performing a reflection (right). The blue dot
denotes the vertex $(n,m)$ and its image under the two
transformations. Here $L=15$, $M=25$, $N=6$, $n=5$ and
$m=10$.}\label{fig:particle_hole}
\end{figure}

\subsection{Equivalent boundary conditions}
It is easily seen that, due to the combinatorial constraints implied
by the vertex rules (absence of the `all thick edges' and of the `two
horizontal thick edges' vertex configurations), four triangular
regions arise in the four corners, where all vertices are frozen.  For
instance, the vertices in the southwest triangular corner, with
lattice coordinates $m+n\leq N$, are all of type $b$; similarly, the
vertices in the northeast triangular corner, with lattice coordinates
$m+n\geq M+L-N+2$, are again all of type $b$.  In the same way, all
vertices in the northwest triangular corner, $m-n\geq M-L+N+1$, and in
the southeast one, $m-n\leq -N-1$, are of type $a$.  Recalling that
all Boltzmann weights are set to $1$, we may cut away these four
triangular corners with no effect on the Boltzmann weights of the
configurations, nor on the partition function. After that we are left
with the model on a hexagonal domain, see Fig.~\ref{fig:hex-bc}.

\begin{figure}[t]
\includegraphics{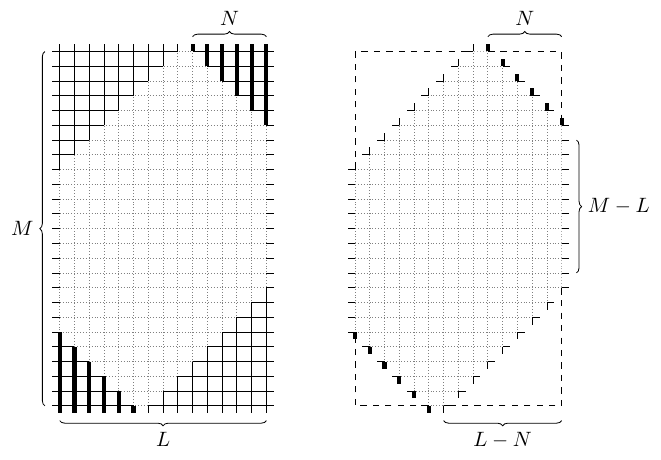}
\caption{The four-vertex model on the rectangular domain, with the
  four frozen triangular regions (left), and on the hexagonal
  domain (right).}\label{fig:hex-bc}
\end{figure}

\subsection{Four-vertex model and non-intersecting lattice
  paths}\label{sec:bijection} There exists a natural bijection between
the configurations of the four-vertex model and non-intersecting
lattice paths, corresponding to that between strictly decreasing and
ordinary boxed plane partitions.  In terms of paths, the bijection is
based on two facts: first, each configuration is uniquely determined
by the positions of the thick horizontal edges, and, second, each path
has exactly $L-N$ such edges. If we shift the $i$-th horizontal thick
edge ($i=1,\dots, L-N$, counting from the southwest) of each path by
$i-1$ lattice spacings southward, then we get a configuration of $N$
non-intersecting lattice paths on the $L\times K$ lattice,
$K=M-L+N+1$, with no further constraint, see Fig.~\ref{fig:NILP}.  In
this new setting, the $j$th path, $j=1,\dots,N$ connects the vertices
of coordinates $(j,1)$ and $(L-N+j,K)$.

\begin{figure}[t]
\includegraphics{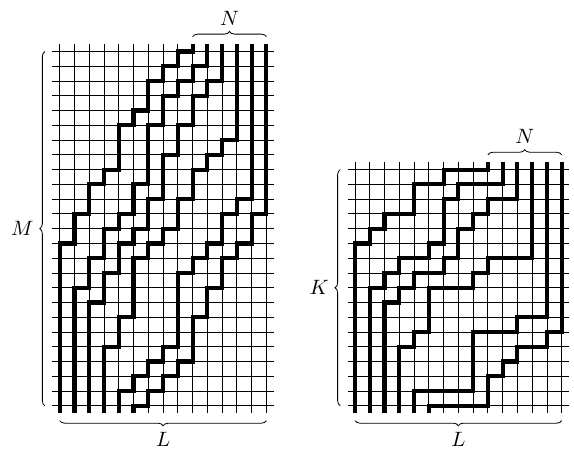}
\caption{A configuration of the four-vertex model (left), and the
  corresponding configuration of non-intersecting lattice paths
  (right). Here $L=15$, $M=25$, $N=6$, and thus $K=17$.}
\label{fig:NILP}
\end{figure}

Note that, in the non-intersecting lattice path version of the model,
due to combinatorial constraints, we may replace the vertical paths in
the southwest and northeast corners of the lattice with horizontal
ones, or may even simply remove these corners alltogether, obtaining
the same result.  In other words, we could have equivalently
considered, as southwest endpoints of the $j$th path, $j=1,\dots,N$,
the vertices $(1,N-j+1)$, or the vertices $(j,N-j+1)$. Similarly, we
could have considered as northeast endpoints the vertices $(L,K-j+1)$,
or the vertices $(L-N+j,K-j+1)$, again obtaining the same result. This
property will be sometimes used for simplicity, in the calculation of
various correlation functions of the model, in the Appendices.

With the help of this bijection, by applying the
Linstr\"om-Gessel-Viennot lemma \cite{L-73,GV-85}, see also
\cite{S-90}, one can find explicit expressions for various quantities
of interest of the model, the simplest being of course the partition
function.

\begin{proposition}\label{prop:Z}
The partition function is given by
\begin{equation}\label{eq:Z=B}
  Z_{L,M,N} = 
        \prod_{j=1}^{N}\frac{(j-1)!(M-N+j)!}{(L-N+j-1)!(M-L+j)!}.
\end{equation}
\end{proposition}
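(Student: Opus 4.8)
The plan is to prove the partition function formula \eqref{eq:Z=B} by counting non-intersecting lattice paths via the Lindström-Gessel-Viennot (LGV) lemma, as foreshadowed in the paragraph preceding the proposition. By the bijection established in Sect.~\ref{sec:bijection}, $Z_{L,M,N}$ equals the number of families of $N$ non-intersecting lattice paths on the $L \times K$ lattice, $K = M-L+N+1$, where the $j$th path runs from the source $u_j = (j,1)$ to the sink $v_j = (L-N+j, K)$. Since all Boltzmann weights are set to $1$, each admissible configuration carries weight $1$, so $Z_{L,M,N}$ is exactly this count.

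First I would invoke the LGV lemma, which expresses the number of non-intersecting path families as a determinant $Z_{L,M,N} = \det_{1\le i,j\le N} \left( h(u_i, v_j) \right)$, where $h(u_i,v_j)$ is the number of single monotone (rightward/upward) lattice paths from $u_i$ to $v_j$. The crucial point is that the source and sink endpoints are arranged in a compatible (non-permuting) order, so that the only surviving term corresponds to the identity permutation and the sign is $+1$; this is guaranteed by the monotone arrangement of the $u_j$ along the south boundary and the $v_j$ along the north boundary. Each single-path count is a binomial coefficient: a path from $(j,1)$ to $(L-N+j', K)$ takes $K-1 = M-L+N$ vertical steps and $(L-N+j')-j = L-N+j'-j$ horizontal steps, giving $h(u_j, v_{j'}) = \binom{M-L+N + (L-N+j'-j)}{M-L+N} = \binom{M+j'-j}{M-L+N}$.

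Next I would evaluate the resulting determinant,
\begin{equation}
Z_{L,M,N} = \det_{1\le i,j\le N} \binom{M+j-i}{M-L+N}.
\end{equation}
This is a determinant of binomials whose entries depend only on $j-i$ in the numerator, a Toeplitz-like structure of a standard type. The evaluation can be carried out either by recognizing it as a special case of the classical Lindström-Gessel-Viennot evaluation for boxed plane partitions (equivalently, the MacMahon box formula), or by direct row/column reduction: one pulls out common factorial factors from each row and column and reduces the remaining determinant to a Vandermonde-type product. The key algebraic identity is that such a binomial determinant factors into a product of ratios of factorials, which upon simplification should match the right-hand side of \eqref{eq:Z=B}.

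The main obstacle I anticipate is the determinant evaluation itself: turning $\det \binom{M+j-i}{M-L+N}$ into the clean product form $\prod_{j=1}^{N} \frac{(j-1)!\,(M-N+j)!}{(L-N+j-1)!\,(M-L+j)!}$ requires careful bookkeeping of the factorial factors extracted during row/column operations and correct handling of the Vandermonde product $\prod_{i<j}(j-i)$ that appears. Rather than grinding through this by hand, I would prefer to cite the known count for lozenge tilings of a hexagon (the bijection to ordinary boxed plane partitions is noted in the text), since the LGV determinant here is precisely the one giving MacMahon's boxed plane partition formula; one then only needs to verify that the box dimensions $N \times (L-N) \times (M-L)$ dictated by the endpoints reproduce the stated product after reindexing.
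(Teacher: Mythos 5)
Your proposal follows essentially the same route as the paper's proof in App.~A: the bijection to $N$ non-intersecting lattice paths from $(j,1)$ to $(L-N+j,K)$, the Lindstr\"om--Gessel--Viennot determinant $\det_{1\le i,j\le N}\bigl[\binom{M+j-i}{M-L+N}\bigr]$ (the transpose of the paper's $\det\bigl[\binom{M+i-j}{L-N+i-j}\bigr]$), and an appeal to a known binomial-determinant evaluation (the paper cites Krattenthaler's Thm.~26; equivalently MacMahon's box formula, as in the paper's remark). One small correction to your closing aside: the relevant box is $N\times(L-N)\times(M-L+1)$, not $N\times(L-N)\times(M-L)$, i.e.\ $Z_{L,M,N}=\mathrm{PL}(N,L-N,M-L+1)$, which does reproduce \eqref{eq:Z=B} exactly.
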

The proof is very standard, see App.~\ref{app:boundary} for details.

\begin{remark} The  above product 
  may be equivalently rewritten as $\mathrm{PL}(N,L-N,M-L+1)$, where
\begin{equation}\label{eq:BPP}
	\mathrm{PL}(r,s,t) = \prod_{i=1}^{r}\prod_{j=1}^{s}\prod_{k=1}^{t}
		\frac{i+j+k-1}{i+j+k-2}
\end{equation}
is the famous Mac-Mahon formula for the number of plane partitions
that fit in a box of size $r \times s \times t$.
\end{remark}

The relation between boxed plane partitions and non-intersecting
lattice paths is well known, see, e.g., \cite{Br-99} and reference
therein.  

\begin{remark} It is clear from the last remark that, in order to have
 more that just one admissible configuration in the model, we must
 have $ M\geq L>N$. 
\end{remark}

\subsection{The boundary-refined partition function}\label{sec:refZ}
A crucial role in the Tangent Method is played by the boundary-refined
partition function, defined as follows. Let us consider a generic
configuration of the system on the hexagonal domain and focus on the
trajectory of the leftmost path. Starting from the southmost vertex
$(1,1)$ such a path will consist of a sequence of $b$- and
$c$-vertices.  There is clearly a last vertex of type $b$. Denoting
its horizontal coordinate by $n$, $n\in[1,L-N+1]$, its vertical
coordinate is $m=n+M-L+N-1$, see Fig.~\ref{fig:boundary-4vm-NILP}.

We may thus define the boundary-refined partition function
$Z_{L,M,N}^{(1)}(n)$, that enumerates the configurations of the model
according to the horizontal coordinate of the last $b$ vertex of the
leftmost path.

\begin{figure}[t]
\includegraphics{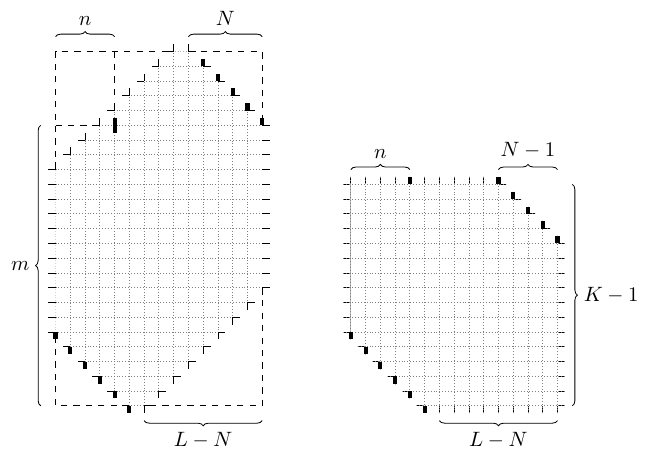}
\caption{The boundary conditions defining the boundary-refined
  partition function $Z_{L,M,N}^{(1)}(n)$ of the four-vertex model
  (left), and the corresponding picture in terms of non-intersecting
  lattice paths (right). Here $L=15$, $M=25$, $N=6$, and $n=5$,
  implying $K=17$ and $m=20$.}
\label{fig:boundary-4vm-NILP}
\end{figure}

\begin{proposition}\label{prop:H}
The number of configurations conditioned to have a vertex of type $b$
at position $(n,n+M-L+N-1)$ is given by
\begin{equation}\label{eq:Zrefined}
  Z_{L,M,N}^{(1)}(n) = \frac{(M-L-1+n)!
    (L-n)!}{(n-1)!(L-n-N+1)!(M-L+N-1)!}Z_{L,M-1,N-1}.
\end{equation}
\end{proposition}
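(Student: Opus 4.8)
The plan is to compute $Z_{L,M,N}^{(1)}(n)$ directly using the non-intersecting lattice path picture and the Lindstr\"om--Gessel--Viennot (LGV) lemma, in parallel with the proof of Proposition~\ref{prop:Z}. The key observation is that fixing the last $b$-vertex of the leftmost path at $(n,n+M-L+N-1)$ amounts, in the path formulation, to pinning the northeast endpoint of the leftmost ($j=1$) path. Concretely, after the bijection of Sect.~\ref{sec:bijection}, the leftmost path no longer terminates at the free endpoint $(L-N+1,K)$ but is forced to pass through a specific vertex determined by $n$; equivalently, the top step of this path is shifted so that its terminal edge sits at the horizontal position dictated by $n$. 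First I would translate the condition ``last $b$-vertex at $(n,n+M-L+N-1)$'' into the statement that the leftmost path makes its final horizontal move at column $n$, so that the remaining $N-1$ paths, together with the portion of the configuration above the pinned vertex, form an \emph{unconstrained} non-intersecting lattice path ensemble of the same scalar-product type but with parameters $L,M-1,N-1$.

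Next I would set up the LGV determinant for the refined ensemble. The enumeration factorizes: one factor counts the number of ways the leftmost path reaches the pinned vertex $(n,\cdot)$ from its source, which is a single binomial coefficient $\binom{M-L-1+n}{n-1}$ (a path from $(1,1)$ to a point at horizontal displacement $n-1$ and the appropriate vertical displacement, subject to the no-two-consecutive-horizontal-steps constraint that in the path picture becomes an ordinary lattice path count). The second factor counts the number of ways to complete the configuration above the pinned point with the remaining $N-1$ paths; by the remark at the end of Sect.~\ref{sec:bijection} allowing us to freely move endpoints in the corners, this residual count is precisely $Z_{L,M-1,N-1}$, up to a combinatorial prefactor accounting for the position of the pinned vertex relative to the reduced domain. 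Matching the two factors against the claimed formula \eqref{eq:Zrefined} then reduces to an identity among factorials. The binomial $\binom{M-L-1+n}{n-1}=\frac{(M-L-1+n)!}{(n-1)!(M-L)!}$ supplies part of the prefactor, and the $(L-n)!/[(L-n-N+1)!(M-L+N-1)!]$ piece comes from reindexing the source/sink data of the reduced $(N-1)$-path problem.

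The main obstacle I anticipate is the careful bookkeeping of the shift: in the four-vertex model the paths obey the extra constraint forbidding two consecutive horizontal steps, and the bijection of Sect.~\ref{sec:bijection} that removes this constraint is \emph{non-local}, shifting the $i$-th horizontal thick edge southward by $i-1$ spacings. I must verify that pinning the last $b$-vertex at horizontal coordinate $n$ maps cleanly, under this shift, onto a single pinned lattice-path vertex in the unconstrained picture, and that the resulting two sub-ensembles are genuinely independent (so that the count factorizes rather than requiring a full $N\times N$ determinant). The cleanest route is likely to exhibit the refined quantity as a ratio of LGV determinants—one for the numerator with the leftmost endpoint fixed, one being $Z_{L,M-1,N-1}$—and show that all but one row/column of the two determinants coincide, so the ratio collapses to the single binomial-type factor times the explicit factorial ratio. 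Once the factorization is established, the remaining verification is a routine manipulation of factorials, most transparently carried out by comparing with the product formula \eqref{eq:Z=B} for $Z_{L,M,N}$ itself, and is deferred, as with Proposition~\ref{prop:Z}, to App.~\ref{app:boundary}.
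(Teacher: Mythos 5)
Your overall setup---translating the pinned $b$-vertex into a pinned endpoint of the leftmost path in the non-intersecting lattice path picture and applying the Lindstr\"om--Gessel--Viennot lemma---is exactly the paper's starting point in App.~\ref{app:boundary}. The genuine gap is in the middle step: the count does \emph{not} factor into (ways for the pinned path) times (ways for the remaining $N-1$ paths). The non-intersection constraint couples the leftmost path to all the others, and that coupling survives in the final answer. Concretely, the paper writes the full $N\times N$ LGV determinant with terminal abscissae $x_1=n$ and $x_j=L-N+j$ for $j\geq 2$, and evaluates it in closed form (via \cite{K-01}, Thm.~26) as $\prod_{i<j}(x_j-x_i)\,\prod_{j}\frac{(K-2+x_j-N)!}{(x_j-1)!\,(K-1-j)!}$ with $K=M-L+N+1$. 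The factor $(L-n)!/(L-n-N+1)!$ in \eqref{eq:Zrefined} is precisely the Vandermonde cross-term $\prod_{j=2}^{N}(x_j-x_1)=\prod_{j=2}^{N}(L-N+j-n)$, i.e.\ the interaction between the pinned path and the rest; it is not a reindexing artefact of a reduced $(N-1)$-path problem, as you suggest. Likewise the single-path factor is $\frac{(M-L-1+n)!}{(n-1)!\,(M-L+N-1)!}$, not the binomial $\binom{M-L-1+n}{n-1}=\frac{(M-L-1+n)!}{(n-1)!\,(M-L)!}$ you propose, so your bookkeeping of the prefactor would not close.

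Your fallback---exhibiting the answer as a ratio of two LGV determinants and arguing that the ratio collapses because all but one row or column coincide---is not a valid determinant identity: the two determinants have different sizes ($N$ and $N-1$), and even for equal sizes, agreement in all but one row does not reduce the ratio to a single scalar factor. The workable route, and the one the paper takes, is to evaluate the full determinant as an explicit product and only then observe that the $x_1$-independent part of that product reassembles, via the product formula \eqref{eq:Z=B}, into $Z_{L,M-1,N-1}$, while the $x_1$-dependent part (single-path factor times Vandermonde cross-terms) gives the stated prefactor. With that replacement your argument goes through; the translation of the pinning condition and the final factorial manipulation are otherwise as in App.~\ref{app:boundary}.
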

The proof, based on the above mentioned bijection with
non-intersecting lattice paths, is given in
App.~\ref{app:boundary}.

\begin{remark}
It follows from the definition of the partition function and its
boundary-refined counterpart that
\begin{equation}
  \sum_{n=1}^{L-N+1} Z_{L,M,N}^{(1)}(n) =Z_{L,M,N},
\end{equation}
as it may be verified by direct calculation.
\end{remark}

Resorting to the bijection between the four-vertex model and
non-intersecting lattice paths, one may also evaluate more
sophisticated correlation functions. In this respect, a fundamental
building block is what we name `column-refined partition function',
that is the partition function of the model, when conditioned to have
all its thick edges in a given column at assigned positions, see
App.~\ref{app:column-Z} for a precise definition. Partial summations
over the positions of these thick edges allows for the evaluation of
various interesting correlation functions. We provide a derivation of
the column-refined partition function for non-intersecting lattice
paths in App.~\ref{app:column-Z}.  The obtained expression, evidenty
related to discrete log-gases, is used in App.~\ref{app:EFP_AFP} to
evaluate EFP and AFP in the four-vertex model.

\section{Arctic curve via the Tangent Method}\label{sec:TMmethod}

In this section we apply the Tangent Method \cite{CS-16} to obtain the
arctic curve of the model. The method is based on considering a slight
modification of the boundary conditions, and hence of the
configurations of the model. Since the required modification is
different for various portions of the curve, we treat them separately.

\subsection{Determination of the ferroelectric/disorder interface}
To start with, we focus on the northwest portion $\Gamma_1$ of the
arctic curve. To obtain its explicit expression, we slightly modify
the boundary condition by moving the north extremal edge of the
leftmost path from the $(L-N+1)$th vertical edge of the north boundary
to the $r$th one, where $r\leq L-N$, see Fig.~\ref{fig:TM-west}.

\begin{figure}[t]
\includegraphics{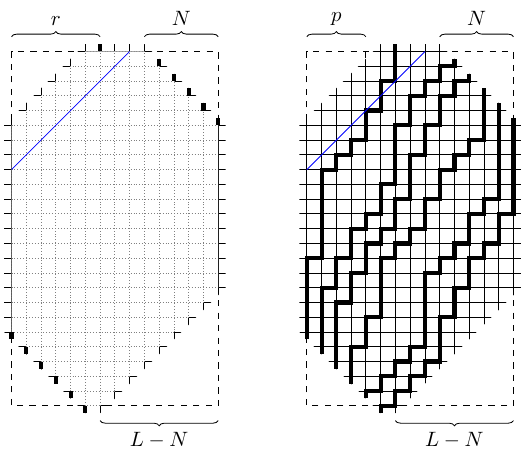}
\caption{Modified boundary conditions, with  $r=7$ (left), and a possible
  configuration, with $p=5$ (right). The blue line indicates
  the northwest boundary of the hexagonal domain.}
\label{fig:TM-west}
\end{figure}

The main assumption of the Tangent Method is that in the scaling limit
such a path first follows the arctic curve, and then departs
\emph{tangentially} from it and becomes a straight line that
intersects the north boundary of the domain at the point corresponding
to (the rescaled value of) $r$.  Note that despite its intuitiveness,
such a behaviour is still conjectural in general. Rigorous results
have been derived in some particular cases \cite{A-20,DGR-19}.  Later
on we shall show that the conjectural arctic curve obtained through
the Tangent Method, besides being well supported by numerical
simulations, may also be derived rigorously by performing the
asymptotic analysis of some suitable correlation functions.

Let us denote the partition function of the model with the modified
boundary conditions by $\tilde{Z}_{L,M,N}^{(1)}(r)$. We can split the
corresponding modified domain into two parts: the hexagonal domain and
the northwest triangular region.  For any given configuration on the
modified domain, let us denote by $p$ the horizontal coordinate of the
vertex where the leftmost path, arriving from southwest, first touches
the border $m=n+M-L+N$. Note that this implies the presence of a
$b$-vertex at position $(p,p+M-L+N-1)$, see Fig.~\ref{fig:TM-west}.
With this setting, the partition function may be expressed as
\begin{equation}\label{eq:TMwZ=sum}
	\tilde{Z}_{L,M,N}^{(1)}(r) = \sum_{p=1}^{r} Z_{L,M,N}^{(1)}(p)
        E_{L,M,N}^{(1)}(p,r),
\end{equation} 
where $Z_{L,M,N}^{(1)}(p)$ is the boundary-refined partition function
defined in Sect.~\ref{sec:refZ}, and $E_{L,M,N}^{(1)}(p,r)$ is the partition
function of the triangular extension, with a path connecting the
vertices of coordinates $(p,p+M-L+N)$ and $(r,M)$, see
Fig.~\ref{fig:TM-wc-split}.

\begin{figure}[t]
\includegraphics{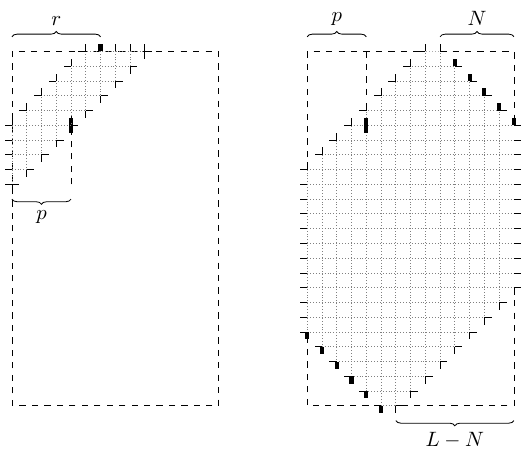}
\caption{The boundary conditions on the triangular extension,
  corresponding to $E_{L,M,N}^{(1)}(p,r)$ (left) and on the hexagonal
  domain, corresponding to the boundary-refined partition functions
  $Z_{L,M,N}^{(1)}(p)$ (right).}\label{fig:TM-wc-split}
\end{figure}

The quantity $E_{L,M,N}^{(1)}(p,r)$ is easily evaluated, being simply
the number of configurations for a single path from $(p,p+M-L+N)$ to
$(r,M)$, satisfying the four-vertex rules, or equivalently, the number
of configurations for a directed lattice path from $(p,M-L+N+1)$ to
$(r,M-r+1)$. One has
\begin{equation}\label{eq:E_LMN_1}
	E_{L,M,N}^{(1)}(p,r) = \binom{L-N-p}{r-p}.
\end{equation}
As for $Z_{L,M,N}^{(1)}(p)$, we resort to Prop.~\ref{prop:H}.

We are interested in studying $\tilde{Z}_{L,M,N}^{(1)}(r)$ in the
scaling limit. To this aim, we rescale the parameters as in
\eqref{eq:scaling}, and also
\begin{equation}\label{eq:scaling2}
  p=\lceil t\ell\rceil, \qquad  r=\lceil  u\ell \rceil ,
\end{equation}
with $t\in(0,u]$ and $u\in(0,\mathcal{L}-\mathcal{N}]$, and send
    $\ell\to\infty$. Expression \eqref{eq:TMwZ=sum} may be interpreted
    as a Riemann sum.  In the large $\ell$ limit, this becomes an
    integral, and can be evaluated by the saddle-point method. We are
    thus led to define the `action'
\begin{align}\nonumber
  S^{(1)}(t,u):=&\lim_{\ell\to\infty}\frac{1}{\ell}\log\left[
    \frac{Z_{L,M,N}^{(1)}(p)E_{L,M,N}^{(1)}(p,r)}{Z_{L,M,N}}\right] \\ =&
  (\mathcal{L}-t) \log( \mathcal{L} - t)+ (\mathcal{M}-\mathcal{L}+t)
  \log (\mathcal{M} - \mathcal{L} + t) \nonumber \\&\qquad - (u-t)\log
  ( u-t ) - t \log t +\dots ,
\end{align}
where, in the last line, we have dropped terms that do not depend on
$t$. The saddle-point equation reads
\begin{equation}
  0=\frac{\dd\ }{\dd t}S^{(1)}(t,u)=
  \log\frac{(\mathcal{M}-\mathcal{L}+t)(u-t)}{t(\mathcal{L}-t)}.
  \end{equation}
It is easily verified that its unique solution,
\begin{equation}\label{eq:SPsol}
  t_0=\frac{\mathcal{M}-\mathcal{L}}{\mathcal{M}-u}u,
\end{equation}
corresponds to a maximum of $S^{(1)}(t,u)$. We have thus shown that,
for any given value of $u\in(0,\mathcal{L}-\mathcal{N}]$, the leftmost
  path exits the hexagonal domain at some value $t$ that concentrates
  around $t_0$.  We conclude that in the scaling limit the upper
  portion of such path follows the straight line through the points
  $(t_0, \mathcal{M-L+N}+t_0)$ and $(u,\mathcal{M})$.

Let us now proceed with the derivation of the curve. According to the
Tangent Method the west arc is the envelope of the one-parametric
family of lines passing through the two points specified above, that
is,
\begin{equation}
  (y-\mathcal{M})(t_0-u)=(x-u)(\mathcal{N}-\mathcal{L}+t_0),\qquad
  u\in(0,\mathcal{L-N}],
\end{equation}
or, using \eqref{eq:SPsol} to eliminate $t_0$,
\begin{equation}\label{eq:family1}
  u(u-\mathcal{L})y+[\mathcal{M}(\mathcal{L}-\mathcal{N}-u)+\mathcal{N}u]x
  +\mathcal{N}(\mathcal{M}-u)u=0,
  \qquad u\in(0,\mathcal{L-N}].
\end{equation}
The envelope is easily evaluated by differentiating with respect to
$u$,
\begin{equation}\label{eq:family2}
  (\mathcal{L}-2u)y+(\mathcal{M}-\mathcal{N})x-
  \mathcal{N}(\mathcal{M}-2u)=0,
\end{equation}
and solving the obtained pair of equations for  $x$ and $y$,
with the result
\begin{align}
  x(u)=&\frac{\mathcal{N}(\mathcal{M-L})u^2}
  {(\mathcal{M-N})u^2-2\mathcal{M}(\mathcal{L-N})u +\mathcal{LM(L-N)}},
  \\
  y(u)=&\frac{\mathcal{N(M-N)}u^2-2\mathcal{MN(L-N)}u+\mathcal{M}^2
    \mathcal{N(L-N)}}{(\mathcal{M-N})u^2-2\mathcal{M}(\mathcal{L-N})u
    +\mathcal{LM(L-N)}}.
\end{align}
The arc of the curve described by this last expression, as $u$ varies over
the interval $(0,\mathcal{L-N})$, is the northwest portion of the
arctic curve, $\Gamma_1$. Indeed it is easily verified that
$(x(0),y(0))$ lies on the line $x=0$, that is the west boundary of the
hexagonal domain, and that $(x(\mathcal{L-N}),y(\mathcal{L-N}))$ lies
on the line $y=x+\mathcal{M-L+N}$, that is the northwest boundary of
the hexagonal domain. Finally, note that $x(u)\vert_{u=\mathcal{L-N}}=\xc$,
with $\xc$ given by \eqref{eq:xc}.

The portion $\Gamma_1$ of the arctic curve may also be written in
implicit form, by using \eqref{eq:family2} to eliminate $u$ in
\eqref{eq:family1}. It is then given by the portion of the algebraic
curve (actually, an ellipse)
\begin{equation}\label{eq:ellipse}
	\left(\mathcal{L} y - \mathcal{NM} + (\mathcal{M-N})x \right)^2 + 
	4 \mathcal{M}(\mathcal{L}-\mathcal{N})x(\mathcal{N}-y) = 0
\end{equation}
lying in the region $x\in[0,\xc]$, $y\in[y(0),y(\xc)]$, with
$y(0)=\mathcal{M}\mathcal{N}/\mathcal{L}$.

Solving the equation \eqref{eq:ellipse} for $y$ we find
that
\begin{equation}
	y =
	\frac{\mathcal{MN}(\mathcal{L}-2x)+
          (\mathcal{M+N})\mathcal{L}x}{\mathcal{L}^2}
	\pm 
	\frac{ 2\sqrt{  \mathcal{MN(M-L)(L-N)} (\mathcal{L}-x) x  }  } 
			 {\mathcal{L}^2}.
\end{equation}
Since we are interested in the northwest portion of the arctic curve,
we must choose the plus sign. We have thus obtained the expression
\eqref{eq:ac-answer1} stated in Thm.~\ref{thm:result}. Note that our
calculation here allows to determine only that portion of the arctic
curve with $x\in (0,\xc]$, corresponding to values of the slope of the
  tangent path decreasing from $+\infty$ to $1$.

\subsection{Determination of the anti-ferroelectric/disorder interface}
We now turn to the determination of the north portion, $\Gamma_2$ of
the arctic curve. Note, however, that the frozen region adjacent to
$\Gamma_2$ is fully packed of $c$ vertices, that is of zigzag paths,
and the Tangent Method may not be applied directly, but requires a
slightly involved preparation.

First we extend the hexagonal domain by adding a triangular region to
its northeast side, consisting in the vertices of coordinates
$\{(n,m)$: $L-N<n\leq L$, $\abs{m-M-1}< n-L+N\}$, and impose all
vertices on its east boundary to be of type $c$ (starting with a
vertex of the last type listed in Fig.~\ref{fig:weights}, and
alternating between the two types of $c$-vertices, see
Fig.~\ref{fig:TM-north-1}).  Such boundary condition constrains all
vertices in the triangular extension to be of type $c$, with the $N$
paths from the hexagonal domain being continued through the triangular
extension with a zig-zag pattern, till the east boundary of the
extended domain. The extension being frozen by construction, the
configurations in the hexagonal domain and the partition function are
left unaltered.

\begin{figure}[t]
\includegraphics{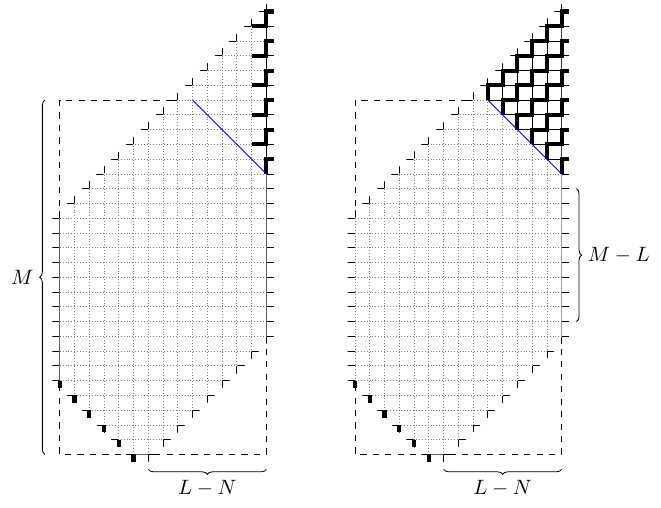}
\caption{The hexagonal domain with the triangular extension, and the
  new boundary conditions (left); these induce freezing of the whole
  triangular region, while preserving the boundary conditions of the
  original hexagonal domain (right). The blue line indicates the
  nortwest boundary of the hexagonal domain.}\label{fig:TM-north-1}
\end{figure}

Next we modify the boundary conditions on the east side by adding one
vertex of type $a$ at position $(L,M+N-2r)$, $r=0,\dots,N$, and
shifting the $2N-2r-1$ remaining vertices one position to the south,
see Fig.~\ref{fig:TM-north-2a}. The meaning of this modification
becomes clear when one focuses on the vertices of type $a$ and
interprets them as particles. Then the configurations of the model are
essentially configurations of two families of non-intersecting lattice
paths, see Fig.~\ref{fig:TM-north-2b}. The topmost thin (blue) path is
the relevant one for application of the Tangent Method in this
extended context, see Fig.~\ref{fig:TM-north-2b}, right.  For the sake
of clarity, we have also extended the domain with two triangular
regions of $a$ vertices.

\begin{figure}[t]
\includegraphics{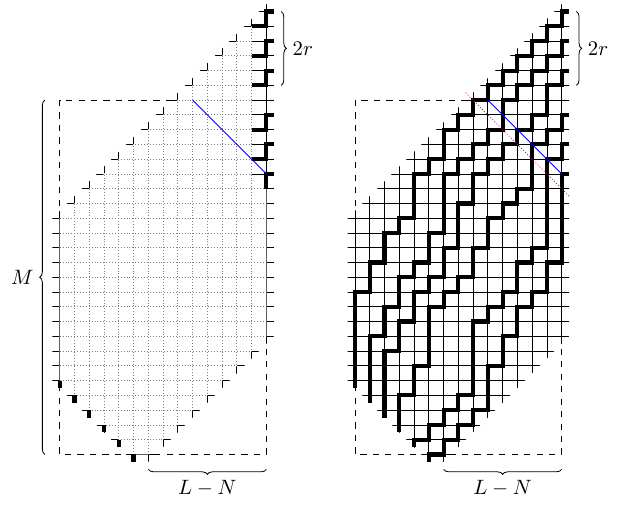}
\caption{Modified boundary conditions, with $r=3$ (left), and a
  possible configuration (right).  The blue line indicates the
  northwest boundary of the hexagonal domain. There is one and only
  one vertex of type $a$ lying on the dotted red line.}
  \label{fig:TM-north-2a}
\end{figure}

\begin{figure}[t]
\includegraphics{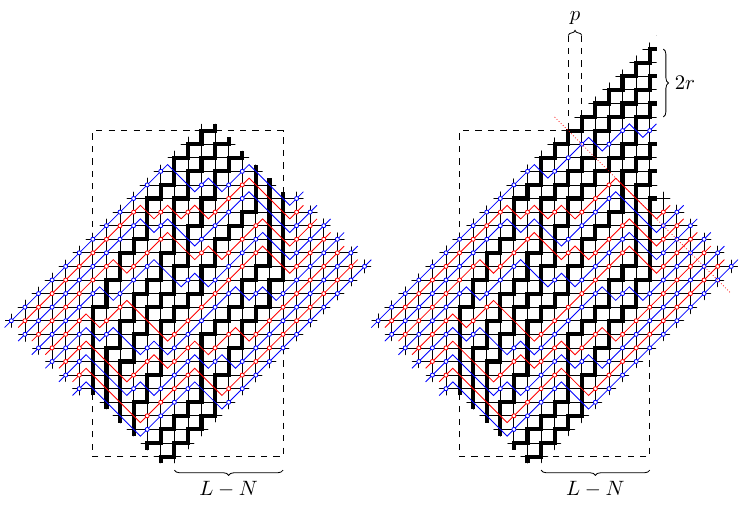}
\caption[]{Configuration of the four-vertex model with original (left)
  and modified (right) boundary conditions in terms of two species of
  non-intersecting lattice paths. The topmost thin (blue) path in the
  right picture is the relevant one for the Tangent Method. Here $p=1$
  and $r=3$. The dotted red line is as in
  Fig.~12.}\label{fig:TM-north-2b}
\end{figure}

Let us evaluate the quantity $\tilde{Z}_{L,M,N}^{(2)}(r)$, denoting
the partition function of the model with the modified boundary
conditions. Again, we may split the domain into two parts: the
original hexagonal domain, and its triangular extension.
For any given configuration on the extended domain with modified
boundary conditions, there is one and only one vertex of type $a$
lying on the line $m+n=L-N+M$ (adjacent to the northeast boundary of
the hexagonal domain, see Fig.~\ref{fig:TM-north-2a}, right). We may
write the coordinates of such an $a$-vertex as $(L-N+p,M-p)$, and thus
use $p=0,\dots,N$ to label its position.  With this setting, the
partition function of the model with modified boundary conditions may
be expressed as
\begin{equation}\label{eq:Z2}
	\tilde{Z}^{(2)}_{L,M,N}(r) = \sum_{p=0}^{r} Z^{(2)}_{L,M,N}(p)
        E^{(2)}_{L,M,N}(p,r),
\end{equation}
where $Z^{(2)}_{L,M,N}(p)$ is the partition function on the hexagonal
domain, conditioned to have a vertex of type $a$ at position
$(L-N+p,M-p)$, and $E^{(2)}_{L,M,N}(p,r)$ is the partition function of
the triangular extension.  Referring to Fig.~\ref{fig:TM-north-2b},
right, focussing on the top blue path therein, and proceeding as for
$E_{L,M,N}^{(1)}(p,r)$, see \eqref{eq:E_LMN_1}, it is easy to find
that
\begin{equation}
	E^{(2)}_{L,M,N}(p,r) = \binom{ N-p}{r-p}.
\end{equation}
As for the partition function $Z^{(2)}_{L,M,N}(p)$, using the
particle-hole symmetry \eqref{eq:phsym}, we find that
\begin{equation}
	Z^{(2)}_{L,M,N}(p) = Z_{L,M,L-N}^{(1)}(N-p+1),
\end{equation}
where $Z_{L,M,N}^{(1)}(n)$ is the boundary-refined partition function,
evaluated in \eqref{eq:Zrefined}.

We may now proceed to evaluate \eqref{eq:Z2} in the scaling limit.
Let us rescale the parameters and coordinates of the lattice variables
as in \eqref{eq:scaling}. Let also $p=\lceil t\ell\rceil$ and
$r=\lceil u\ell\rceil$, with $t\in[0,u]$ and $u\in[0,\mathcal{N}]$.
Again, in the scaling limit, the sum in \eqref{eq:Z2} turns into an
integral whose values is dominated by the saddle-point of the `action'
\begin{align}
  S^{(2)}(t,u):=&\lim_{\ell\to\infty}\frac{1}{\ell}
  \log\left[
    \frac{(M-L+N-p)!(L-N+p)!}{p!(N-p)!}
    \binom{N-p}{r-p}
    \right]
  \\
  =&
  (\mathcal{M-L+N}-t)\log(\mathcal{M-L+N}-t)+
  (\mathcal{L}-\mathcal{N}+t)\log (\mathcal{L}-\mathcal{N}+t)
  \\
  &\qquad
  - t\log t - (u-t)\log(u-t)-  (\mathcal{N} - u) \log(\mathcal{N} - u),
\end{align}
namely
\begin{equation}\label{eq:sps2}
	t_{0}= 
		\frac{ u ( \mathcal{L} - \mathcal{N} ) }
				 { \mathcal{M} -u}.
\end{equation}
It is straigthforward to check that $t_0$ is the only stationary point
of $S^{(2)}(t,u)$, that it actually corresponds to a maximum, and that
it lies inside the interval $(0,u)$ for all values of $u\in(0,\mathcal{N})$.

Therefore in the scaling limit the diverted path becomes a straight
line through $(\mathcal{L-N}+t_0,\mathcal{M}-t_0)$ and
$(\mathcal{L},\mathcal{M+N}-2u)$.  The arctic curve is the envelope of
the family of such lines as $u\in[0,\mathcal{N}]$,
\begin{equation}
	y =
		\frac{ x - \mathcal{L} }
				 { t_0 - \mathcal{N} }
		(\mathcal{M} - t_0)
		+
		\frac{x - (\mathcal{L}-\mathcal{N}+t_0)}
				 {\mathcal{N} - t_0}
		(\mathcal{M}+\mathcal{N}-2u),
\end{equation}
or, substituting \eqref{eq:sps2} for $t_0$, 
\begin{multline}\label{eq:y=y(x)TM-north}
	\mathcal{MN} (\mathcal{M-L+N}-y+x-2u)
	+ \mathcal{L} u (\mathcal{M-L+N} + x + y)
	\\
	+ 2u x (u  - \mathcal{M-N}) = 0.
\end{multline}
Along the line of previous Section, differentiating the above
expression with respect to $u$, and solving for $u$, we get
\begin{equation}
	u =\frac{ \mathcal{L}^2 + 2 \mathcal{N}x +
          2\mathcal{M}(\mathcal{N} + x) -
          \mathcal{L}(\mathcal{M}+\mathcal{N}+x+y)} { 4x }.
\end{equation}
Substituting this in  \eqref{eq:y=y(x)TM-north} and solving for $y$, we
get
\begin{multline}
	y = \mathcal{L-M-N} - x + 2\,\frac{\mathcal{NM} (\mathcal{L}-2x)+
          (\mathcal{M+N})\mathcal{L}x}{\mathcal{L}^2}
        \\
        \pm 4 \,\frac{ \sqrt{ \mathcal{MN(M-L)(L-N)} (\mathcal{L}-x) x } } {
          \mathcal{L}^2 }.
\end{multline}
This curve is again an ellipse, although a different one, with respect
to \eqref{eq:ellipse}. Since we are interested in the north portion of
the curve, we must choose the plus sign, thus obtaining the expression
\eqref{eq:ac-answer2} stated in Thm.~\ref{thm:result}.

\section{Arctic curve via the EFP Method}\label{sec:EFPmethod}

In this section we introduce two observables which we call `Emptiness
Formation Probability' (EFP) and `Anti-ferroelectric phase Formation
Probability' (AFP). We express these functions in terms of gap
probabilities for discrete log-gases associated to the measure of Hahn
polynomials. Then, following \cite{J-02}, and using the results of
\cite{BKMM-07} for the asymptotic behaviour of these quantities in the
scaling limit, we derive an expression for the arctic curve, and
determine its fluctuations.

\subsection{EFP and AFP}
Let us consider, in the rectangular domain, the topmost $q$
consecutive vertices in the $p$th vertical line, $p\leq L-N$, that is
the vertices of coordinates $(p,M-j+1)$, $j=1,\dots,q$.  Let $\Theta$
be the subset of such vertices contained in the hexagonal domain, that
is the vertices of coordinates $(p,M-q+j)$, $j=1,\dots, \tilde{q}$,
with $\tilde{q}:=p+q-L+N$.  Let $\Omega$ be the set of all
configurations of the model, and $\Omega^{\mathrm{I}}_{p,q}$,
$\Omega^{\mathrm{II}}_{p,q}$ the sets of configurations in which all
vertices in $\Theta$ are of type $a$, or $c$, respectively. We may now
introduce two observables, the EFP and the AFP, denoted as
$F_{L,M,N}(p,q)$ and $G_{L,M,N}(p,q)$, respectively, and defined as
\begin{equation}
  F_{L,M,N}(p,q) :=
  \frac{ \abs{\Omega^{\mathrm{I}}_{p,q}} }{\abs{\Omega}},\qquad
  G_{L,M,N}(p,q) :=
  \frac{ \abs{\Omega^{\mathrm{II}}_{p,q}} }{\abs{\Omega}},
\end{equation}
where $\abs{\Omega}$ stands for the cardinality of $\Omega$.

Note, that, in view of the chosen boundary conditions, the request of
having all vertices in $\Theta$ of type $a$ (for EFP) or $c$ (for AFP)
implies the existence of a subregion in the hexagonal lattice where
all vertices are of type $a$, or $c$, respectively, see
Fig.~\ref{fig:EFPandAFP}.

In what follows we will always assume that
\begin{equation}\label{eq:EFP-cond1}
  p + q > L - N, 
\end{equation}
which guarantees that $\Theta$ is non empty, and also
\begin{equation}\label{eq:EFP-cond2}
  p \leq  L - N, \qquad p + q <M - N,
\end{equation}
for EFP, and  
\begin{equation}\label{eq:EFP-cond3}
  p + q > L - N, \qquad  p \leq  L - N,  \qquad p+q\leq M-N
\end{equation}
for AFP, guaranteeing that $\Omega^{\mathrm{I}}_{p,q}$ and
$\Omega^{\mathrm{II}}_{p,q}$, respectively, are non empty; if this
were the case, the corresponding formation probabilities would simply
vanish.

\begin{figure}[t]
\includegraphics{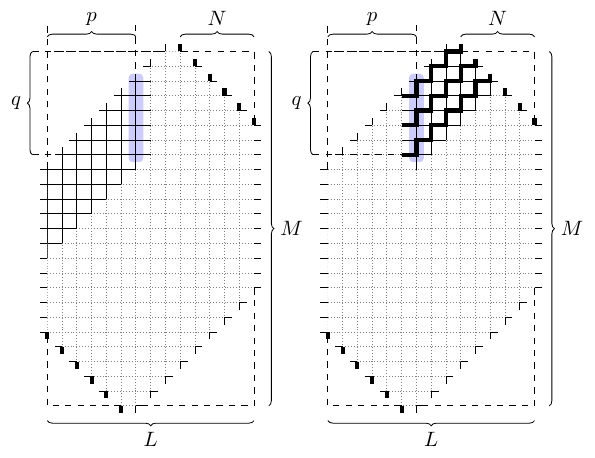}
\caption{Subregions of the hexagonal domain that are frozen, due to
  the condition imposed on the vertices in $\Theta$ (shaded), in the
  case of EFP (left) and AFP (right). Here $L=15$, $M=25$, $N=6$,
  $p=7$, $q=8$, and $\tilde{q}=6$.}
\label{fig:EFPandAFP}
\end{figure}

\subsection{Log-gas representations}
We now turn to the evaluation of some representations of EFP and AFP
in terms of discrete log-gases \cite{F-10}.  To start with, let us
introduce the Hahn  measure,
\begin{equation}\label{eq:wHahn=}
	w^{(\alpha,\beta)}_n(x) = \binom{\alpha+x}{x}\binom{\beta+n-x}{n-x},
\end{equation}
defined on the discrete interval $x\in[0,n]$. When the parameters
$\alpha,\beta >-1$ or $\alpha,\beta<-n$, we may define the
corresponding family of orthogonal polynomials $\{
Q^{(\alpha,\beta)}_{k,n}(x)\}_{k=0,\dots,n}$, satisfying the
orthonormality condition
\begin{equation}
	\sum_{x=0}^n w^{(\alpha,\beta)}_n(x)
        Q^{(\alpha,\beta)}_{k,n}(x) Q^{(\alpha,\beta)}_{l,n}(x)=\delta_{k,l}.
\end{equation}
The polynomials
\begin{multline}\label{eq:orthoQ}
	Q^{(\alpha,\beta)}_{k,n}(x) = 
		(-1)^{k}
		\sqrt{\binom{n}{k}
		\frac{n!(2k+\alpha+\beta+1)(\alpha+1)_k(\alpha+\beta+1)_k}
			{(\alpha+\beta+1)_{n+1}(\beta+1)_k(n+\alpha+\beta+2)_k}
	}\\
	\times
		\FPQ{-k ,\, k+\alpha+\beta+1, \, -x }{\alpha+1,\, -n}{1},\qquad
                k=0,\dots,n, 
\end{multline}
are known as (normalized) Hahn polynomials
\cite{KLS-10}.

We are now ready to introduce the Hahn log-gas. Let
$\mathbf{x}:=\{x_1,\dots,x_s\}$, with $0\leq x_1<\dots<x_s\leq n$,
denote the ordered set of positions of $s$ particles on the discrete
interval $[0,n]$. Consider the probability measure on $[0,n]^s$,
defined as
\begin{equation}\label{eq:P[x]=}
	P^{(\alpha,\beta)}_{n,s} [\mathbf{x}] = 
		\frac{1}{Z(\alpha,\beta,s,n)} \prod_{1\le i < j \le s} (x_i-x_j)^2
		\prod_{j=1}^{s} w^{(\alpha,\beta)}_n (x_j)
\end{equation}
where $w^{(\alpha,\beta)}_n (x)$ is the Hahn measure
\eqref{eq:wHahn=}. The normalization constant
\begin{equation}
	Z(\alpha,\beta,s,n)= \sum_{0\leq \mathbf{x}\leq n} \prod_{1\le
          i < j \le s} (x_i-x_j)^2 \prod_{j=1}^{s}
        w_n^{(\alpha,\beta)}(x_j)
\end{equation}
is the partition function of the log-gas.

\begin{remark}
The partition function $ Z(\alpha,\beta,s,n)$ may also be represented
as the determinant of a Hankel matrix,
\begin{equation}\label{eq:Z-Hankel}
  Z(\alpha,\beta,s,n)=\det_{1\leq i,j\leq s}
  \left[\sum_{x=0}^n x^{i+j-2} w_n^{(\alpha,\beta)}(x)\right],
\end{equation}
whose elements are the moments of Hahn measure. In principle, this
could allow for a direct evaluation of the partition functions in
terms of the leading coefficients of orthonormal Hahn polynomials.
\end{remark}

Let us introduce also the function
\begin{equation}\label{eq:H=def}
  H(d,\alpha,\beta,s,n) :=\sum_{0\leq \mathbf{x}\leq d}
  P^{(\alpha,\beta)}_{n,s} [\mathbf{x}],
\end{equation}
with $d\leq n$, which is nothing but the `gap probability', that is,
the probability of having, in the Hahn log-gas of $s$ particles, no
particle with coordinate larger that $d$. Clearly, the gap probability
may be rewritten as a Hankel determinant, just as done above for the
partition function, see \eqref{eq:Z-Hankel}.

We may now state three propositions expressing EFP and AFP in terms of
discrete Hahn log-gases. The proof of these propositions, based on the
bijection from the four-vertex model to non-intersecting lattice
paths, is provided in App.~\ref{app:EFP_AFP}.

\begin{proposition}\label{prop:EFP}
EFP $F_{L,M,N}(p,q)$ is given by
\begin{equation}\label{eq:F(p,q)=}
	F_{L,M,N}(p,q) = H(d,\alpha,\beta,s,n)
\end{equation}
with parameters
\begin{gather}\label{eq:F(p,q)parameters=}
  d = M-N+\min(p,N)-p-q, \qquad \alpha = \abs{N-p}, \qquad \beta =
  L-N-p,\\ s =\min(p,N), \qquad n = M-L+\min(p,N).
\end{gather}
\end{proposition}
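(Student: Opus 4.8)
The plan is to establish Proposition~\ref{prop:EFP} by expressing the Emptiness Formation Probability directly as a gap probability in the non-intersecting lattice path picture, then matching the resulting combinatorial data to the Hahn log-gas. First I would invoke the bijection of Sect.~\ref{sec:bijection} between four-vertex configurations and $N$ non-intersecting lattice paths on the $L\times K$ lattice, $K=M-L+N+1$. The event $\Omega^{\mathrm{I}}_{p,q}$, requiring all vertices in $\Theta$ to be of type $a$, translates into the statement that the topmost paths passing through column $p$ do not rise above a certain row; equivalently, it freezes an $a$-type (empty of thick horizontal edges) triangular subregion. The key step is to read off, from the geometry in Fig.~\ref{fig:EFPandAFP}, precisely which paths are constrained and to identify the cutoff height $d$ in terms of $p$, $q$, $M$, $N$, $L$. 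This is where the count $s=\min(p,N)$ and the parameter $d=M-N+\min(p,N)-p-q$ should emerge: the number of relevant paths crossing column $p$ is $\min(p,N)$, and their allowed vertical extent sets $d$.

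Next I would apply the Lindström–Gessel–Viennot lemma to write both the constrained count $|\Omega^{\mathrm{I}}_{p,q}|$ and the full partition function $Z_{L,M,N}=|\Omega|$ as determinants of binomial coefficients, using the endpoint data recorded in Sect.~\ref{sec:bijection}. The central task is then to recognize the ratio $|\Omega^{\mathrm{I}}_{p,q}|/|\Omega|$ as the gap probability $H(d,\alpha,\beta,s,n)$ in \eqref{eq:H=def}. Concretely, I would isolate the positions of the $s=\min(p,N)$ relevant paths in the column just before the constraint takes effect. Summing the LGV determinant over all admissible configurations of these positions, each weighted by the number of completions to the left and to the right, produces a discrete orthogonal-polynomial-ensemble sum. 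The squared Vandermonde $\prod_{i<j}(x_i-x_j)^2$ arises from the product of the two Gessel–Viennot determinants (one for each side), and the single-particle weight should collapse, after simplification of factorials, into the Hahn weight $w^{(\alpha,\beta)}_n(x)$ of \eqref{eq:wHahn=} with $\alpha=\abs{N-p}$, $\beta=L-N-p$, and $n=M-L+\min(p,N)$.

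The main obstacle I anticipate is the bookkeeping required to show that the two binomial factors in $w^{(\alpha,\beta)}_n(x)=\binom{\alpha+x}{x}\binom{\beta+n-x}{n-x}$ emerge \emph{exactly}, with the stated parameters, rather than up to some $x$-dependent factor that would spoil the log-gas form. In particular the case distinction hidden in $\min(p,N)$ — reflecting whether column $p$ lies before or after all $N$ paths have been `activated' — must be handled carefully, since it affects both the number of summation variables $s$ and the support parameter $n$. A related subtlety is that the prefactors independent of the summation variables must cancel between numerator and denominator so that the normalization matches $Z(\alpha,\beta,s,n)$ exactly, yielding a genuine probability. Once the weight and the Vandermonde are identified, the constraint $x_i\le d$ imposed by the frozen region is immediate, and comparison with \eqref{eq:H=def} completes the proof; the detailed factorial manipulations I would relegate to App.~\ref{app:EFP_AFP}, as the paper indicates.
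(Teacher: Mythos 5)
Your proposal follows essentially the same route as the paper: translate the EFP event via the bijection of Sect.~\ref{sec:bijection} into the absence of paths through the topmost $\tilde{q}=p+q-L+N$ horizontal edges of column $p$, cut the lattice at that column, apply Lindstr\"om--Gessel--Viennot to each half so that the product of the two determinants yields the squared Vandermonde times a Hahn weight, and identify the resulting sum with the gap probability $H$ of \eqref{eq:H=def} (the paper packages the two-sided LGV computation as the column-refined partition function of App.~\ref{app:column-Z}, with the $\min(p,N)$ case distinction handled by the Kronecker-delta frozen positions $\tilde{n}=\max(N-p,0)$). The normalization concern you raise resolves itself since setting $\tilde{q}=0$ forces both sides to equal $1$.
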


\begin{remark}
The conditions $\alpha> -1$ and $\beta> -1$, entering the definition
of the orthogonalizing measure for Hahn polynomials, are always
fulfilled here. This is evident for $\alpha$; as for $\beta$, recall
that when $p>L-N$, $F_{L,M,N}(p,q)=0$ by construction, see
\eqref{eq:EFP-cond2}.
\end{remark}

\begin{proposition}\label{prop:EFP_AFP}
  AFP is related to EFP as follows:
\begin{equation}
  G_{L,M,N}(p,q) =
  F_{M-L+N+1,M,M-L+1}\left(\tilde{r},L-\tilde{p}\right),
\end{equation}
where $\tilde{r}=\big\lfloor\frac{\tilde{q}+1}{2}\big\rfloor$,
$\tilde{p}=p-\mathrm{mod}(\tilde{q}+1,2)$, and $\tilde{q}=p+q-L+N$.
\end{proposition}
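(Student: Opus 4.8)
The plan is to establish the stated identity by constructing an explicit bijection between the configurations counted by the AFP $G_{L,M,N}(p,q)$ and those counted by the EFP on the transformed lattice, and then invoke Proposition~\ref{prop:EFP} to rewrite the latter as a Hahn gap probability. First I would recall that the AFP asks for all vertices in $\Theta$ (the topmost $\tilde{q}$ vertices of column $p$ inside the hexagon) to be of type $c$, that is, part of the zig-zag anti-ferroelectric pattern. The key observation is that imposing a block of $c$-vertices is equivalent, after the bijection of Sect.~\ref{sec:bijection}, to imposing that the non-intersecting lattice paths traverse a prescribed staircase region in a forced zig-zag manner. The goal is to show that this constraint, when re-read on the appropriate dual lattice, is \emph{exactly} an emptiness (type-$a$) constraint for a four-vertex model on a new rectangle with parameters $L'=M-L+N+1$, $M'=M$, $N'=M-L+1$.

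The main structural step is to identify the correct geometric transformation. I would argue that the AFP region of $c$-vertices, being a frozen zig-zag strip, bijects onto an EFP region of $a$-vertices after performing a combined reflection and a reparametrization that swaps the roles of the two path families introduced in Sect.~\ref{sec:TMmethod} (the vertical-thick-edge paths and the $a$-vertex ``particles''). Concretely, the shift $i\mapsto i-1$ of horizontal thick edges used to pass to free non-intersecting paths turns the alternating $c$-pattern into a fully packed block; reading the complementary family of paths (the $a$-particle trajectories) yields an ordinary emptiness constraint. Tracking how the endpoints $(j,1)\to(L-N+j,K)$ transform under this map should produce the new triple $(M-L+N+1,\,M,\,M-L+1)$, since the number of paths becomes $M-L+1$ and the horizontal extent becomes $M-L+N+1$. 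The parity bookkeeping—the zig-zag pattern forces a distinction between whether the block of $c$-vertices begins with the first or the second type of $c$-vertex—is precisely what produces the floor function $\tilde{r}=\lfloor(\tilde{q}+1)/2\rfloor$ and the correction $\tilde{p}=p-\mathrm{mod}(\tilde{q}+1,2)$.

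The hard part will be the careful treatment of this parity. Because the $c$-vertices come in two alternating sub-types (Fig.~\ref{fig:weights}), a constraint of height $\tilde{q}$ in the original column corresponds to a constraint involving roughly $\tilde{q}/2$ paths in the transformed model, and the exact integer count depends on the parity of $\tilde{q}+1$. I expect the cleanest route is to set up a lattice-path diagram in which the $\tilde{q}$ consecutive $c$-vertices are resolved into their thick/thin edge content, count how many path-steps of each type this fixes, and then match against the definition of $\Theta$ for the EFP on the new rectangle. The shift by $\mathrm{mod}(\tilde{q}+1,2)$ in the column index $\tilde{p}$ and the halving in the height argument $\tilde{r}$ should both drop out of this accounting. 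Once the bijection at the level of constrained configurations is in hand, the identity of \emph{ratios} $\abs{\Omega^{\mathrm{II}}_{p,q}}/\abs{\Omega}$ with the corresponding EFP ratio follows because the total partition functions $\abs{\Omega}$ on the two sides are equal, a fact already encoded in Prop.~\ref{prop:Z} via the particle-hole and plane-partition symmetry (the triple $(M-L+N+1,M,M-L+1)$ is a symmetric rearrangement of $(N,L-N,M-L+1)$ in the MacMahon formula \eqref{eq:BPP}). I would defer the full diagrammatic verification to App.~\ref{app:EFP_AFP}, as the statement itself only asserts the final correspondence.
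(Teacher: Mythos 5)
Your outline follows essentially the same route as the paper's proof in App.~\ref{app:EFP_AFP}: first translate the block of $c$-vertices into the requirement that a path flows through each of the topmost $\tilde{r}$ horizontal edges of column $\tilde{p}$ (the parity of $\tilde{q}$ deciding whether these edges sit to the right or left of the $p$th line, which is exactly the origin of $\tilde{r}=\lfloor(\tilde{q}+1)/2\rfloor$ and $\tilde{p}=p-\mathrm{mod}(\tilde{q}+1,2)$), and then pass to the complementary family of paths so that the packed constraint becomes an emptiness constraint for a four-vertex model on a domain with $M-L+1$ lines. The paper makes the required bijection fully explicit (swap the frozen triangular corners, swap the state of all horizontal edges, and shear the resulting $K-N$ paths eastward onto a $K\times M$ domain), and because this is a bijection on the entire configuration space the equality of the normalizing denominators is automatic, so the appeal to the MacMahon symmetry of Prop.~\ref{prop:Z} is not needed.
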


In turn, these imply the following explicit representation for AFP.

\begin{proposition}\label{prop:AFP} AFP $G_{L,M,N}(p,q)$ is given by
\begin{equation}\label{eq:G(p,q)=}
	G_{L,M,N}(p,q) = H(d,\alpha,\beta,s,n)
\end{equation}
with parameters
\begin{gather}\label{eq:G(p,q)parameters=}
  d = L-N+\min(\tilde{r}, M-L + 1)- 2 - q +\tilde{r},\\
	\alpha = \abs{M - L - \tilde{r} + 1},\qquad
	\beta = N-\tilde{r},\\
	s =  \min(\tilde{r}, M-L + 1),\qquad
	n = L-N+\min(\tilde{r}, M-L + 1)-1,
\end{gather}
where
$\tilde{r}=\big\lfloor\frac{\tilde{q}+1}{2}\big\rfloor=\big\lfloor\frac{p+q-L+N+1}{2}\big\rfloor$.
\end{proposition}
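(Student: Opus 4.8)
The plan is to obtain Proposition~\ref{prop:AFP} as a purely algebraic consequence of Propositions~\ref{prop:EFP} and~\ref{prop:EFP_AFP}, with no further combinatorial input. By Proposition~\ref{prop:EFP_AFP} (which I take as given, its proof being deferred to the appendix) the AFP equals an EFP evaluated on an auxiliary four-vertex model, $G_{L,M,N}(p,q)=F_{M-L+N+1,M,M-L+1}(\tilde{r},L-\tilde{p})$. Feeding the right-hand side into the gap-probability formula of Proposition~\ref{prop:EFP} immediately writes $G_{L,M,N}(p,q)$ as $H(d,\alpha,\beta,s,n)$, so the whole task reduces to checking that the five parameters produced by this substitution coincide with those listed in \eqref{eq:G(p,q)parameters=}.

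Concretely, I would carry out the replacement $(L,M,N,p,q)\mapsto(M-L+N+1,\,M,\,M-L+1,\,\tilde{r},\,L-\tilde{p})$ inside \eqref{eq:F(p,q)parameters=}. Four of the five parameters then match by immediate simplification: $\min(p,N)$ becomes $\min(\tilde{r},M-L+1)$, so that $\alpha=\abs{N-p}\mapsto\abs{M-L-\tilde{r}+1}$, $\beta=L-N-p\mapsto N-\tilde{r}$, $s\mapsto\min(\tilde{r},M-L+1)$, and $n=M-L+\min(p,N)\mapsto L-N-1+\min(\tilde{r},M-L+1)$, reproducing \eqref{eq:G(p,q)parameters=} verbatim.

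The only delicate parameter is $d=M-N+\min(p,N)-p-q$. Under the substitution it becomes $d=-1+\min(\tilde{r},M-L+1)-\tilde{r}+\tilde{p}$, whereas Proposition~\ref{prop:AFP} asserts $d=L-N+\min(\tilde{r},M-L+1)-2-q+\tilde{r}$. Reconciling these two forms is the one genuinely nontrivial step, and it rests on the parity information hidden in the definitions of $\tilde{r}$ and $\tilde{p}$. The key identity to establish is $2\tilde{r}+\mathrm{mod}(\tilde{q}+1,2)=\tilde{q}+1$, which I would verify by splitting into the cases $\tilde{q}$ even and $\tilde{q}$ odd. Combining this with $\tilde{p}=p-\mathrm{mod}(\tilde{q}+1,2)$ and with $q=\tilde{q}-p+L-N$ (obtained from $\tilde{q}=p+q-L+N$) lets me eliminate $\tilde{q}$, the mod terms, and $\tilde{p}$, and confirm that the two expressions for $d$ agree.

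I expect this parity reconciliation to be the main obstacle—not because it is deep, but because the floor and mod operations must be tracked consistently across the two propositions, while everything else is linear substitution. Finally, I would append a short remark, in the spirit of the one following Proposition~\ref{prop:EFP}, checking that the substituted parameters still satisfy $\alpha,\beta>-1$, so that the Hahn measure \eqref{eq:wHahn=} and hence the gap probability \eqref{eq:H=def} remain well defined in the regime relevant to the AFP.
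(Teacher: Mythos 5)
Your proposal is correct and follows exactly the paper's own proof: substitute the parameters of Proposition~\ref{prop:EFP_AFP} into Proposition~\ref{prop:EFP}, observe that $\alpha$, $\beta$, $s$, $n$ match immediately, and reconcile the expression for $d$ via the parity identity $2\tilde{r}-\tilde{p}=\tilde{q}-p+1$ (equivalently, your $2\tilde{r}+\mathrm{mod}(\tilde{q}+1,2)=\tilde{q}+1$ combined with the definition of $\tilde{p}$). Nothing further is needed.
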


\begin{remark}
The conditions $\alpha> -1$ and $\beta> -1$, entering the definition
of the orthogonalizing measure for Hahn polynomials, are always
fulfilled here as well. Again, this is evident for $\alpha$; as for
$\beta$, recall that when $\tilde{r}>N$, $G_{L,M,N}(p,q)=0$ by
construction, see \eqref{eq:EFP-cond3}.
\end{remark}

\subsection{Arctic curve}
To find the arctic curve we need to study the asymptotic behaviour of
EFP and AFP, or equivalently of the function $H(d,\alpha,\beta,s,n)$,
when $d,\alpha,\beta,s,n\to\infty$ with fixed ratios. In such a limit,
$H(d,\alpha,\beta,s,n)$ is known to tend to one if the values of the
parameters are such that the vertices in $\Theta$ are all in a frozen
region, and to zero otherwise. The arctic curve is characterized by
separating these two regimes, see, e.g., \cite{J-00,J-02}.

In order to analize the asymptotic behaviour of
$H(d,\alpha,\beta,s,n)$ in the desired scaling limit, one may resort
to various technologies inspired from and related to random matrix
models and the theory of orthogonal polynomials.

To start with, we set
\begin{equation}\label{scaling2}
        d=\lfloor d_0 \ell\rfloor,
    \qquad \alpha=\lfloor\alpha_0 \ell\rfloor,
    \qquad \beta=\lfloor\beta_0 \ell\rfloor,
    \qquad s=\lceil s_0 \ell\rceil,
    \qquad n=\lceil n_0 \ell\rceil,
\end{equation}
with $\alpha_0,\beta_0>0$ and $s_0<d_0<n_0$. Following standard
methods from the theory of random matrix models, we rescale the
coordinates, $x_j=\lfloor\mu_j \ell\rfloor$, $j=1,\dots,s$.  The sums
over $x_j$'s may be interpreted as Riemann sums, and, in the large
$\ell$ limit, replaced by integrals. Correspondingly, we introduce the
density $\rho(\mu)$, such that
\begin{equation}
	\int_0^{n_0} \rho(\mu) \dd \mu = s_0, \qquad 
	0\leq\rho(\mu) \le 1.
\end{equation}
The first equation is the usual normalization condition, while the
second one is an additional constraint following from the discreteness
of the original variables $x_j$ \cite{DK-93}.

In general, the density function $\rho(\mu)$ may be computed either by
directly solving a suitable variational problem \cite{BKMM-07} or by
resorting to an integral representation based on the asymptotics of
the recursion coefficients for the associated orthogonal polynomials
\cite{KA-99}.  When the primary interest lies in determining the
sole support of the density, rather than obtaining its complete
expression, the latter approach proves to be much more efficient, see
e.g. \cite{J-02}.  This is precisely the case when determining the
arctic curve.  Indeed, denoting by $R(\alpha_0,\beta_0,s_0,n_0)$ the
right endpoint of the support of the density, the arctic curve is
determined by the condition
\begin{equation}\label{eq:R=n-d0}
	R(\alpha_0,\beta_0,s_0,n_0) = d_0,
\end{equation}
see, e.g., \cite{J-00}.

Hahn polynomials have been extensively studied in the literature,
therefore to get the expression of the corresponding density function
we resort to \cite{BKMM-07}. Specifically, for the right endpoint of
the support of the density, see Eq.~(276) therein. In our notations,
it reads:
\begin{multline}\label{right-end-point}
  R(\alpha_0,\beta_0,s_0,n_0) =
  \\
  = \left( \frac{
    \sqrt{(s_0+\alpha_0+\beta_0)(s_0+\alpha_0)(n_0-s_0)} +
    \sqrt{(s_0+\alpha_0+\beta_0+n_0)(s_0+\beta_0)s_0}}
  {(2s_0+\alpha_0+\beta_0)}\right)^2.
\end{multline}

Inserting now \eqref{right-end-point} into \eqref{eq:R=n-d0}, and
specializing the parameters $\alpha_0$, $\beta_0$, $s_0$, $n_0$, $d_0$
according to Prop.~\ref{prop:EFP}, we recover expression
\eqref{eq:ac-answer1} for the portion $\Gamma_1$ of the arctic
curve. In a similar way, specializing the parameters $\alpha_0$,
$\beta_0$, $s_0$, $n_0$, $d_0$ according to Prop.~\ref{prop:AFP}, we
recover expression \eqref{eq:ac-answer2} for the portion $\Gamma_2$ of
the arctic curve.

\subsection{Arctic curve fluctuations}
Having derived the arctic curve of the four-vertex model, the natural
question to address next is that of its fluctuations. In view of the
close relation of the model with non-intersecting lattice paths, or
with lozenge tilings of a hexagon, we expect these fluctuations to be
again governed by Tracy--Widom distribution
\cite{TW-94a,TW-94b}. Indeed, having expressed EFP and AFP as gap
probabilities for the discrete log-gas with the Hahn measure, the answer
to this question follows from the results of \cite{BKMM-07}, see also
\cite{J-00,J-02}.

To start with, we recall a very standard fact, namely that the `gap
probability' $H(d,\alpha,\beta,s,n)$ may be rewritten as a Fredholm
determinant:
\begin{equation}\label{eq:FD1}
  H(d,\alpha,\beta,s,n)=\det\left[1-K_{n,s}|_{(d,n]}\right],
\end{equation}
where $K_{n,s}|_{(d,n]}$ is a discrete integral operator acting on $L^2(d,n]$
    with kernel 
\begin{equation}\label{eq:FD2}
  K_{n,s}(x,y)=\sum_{k=0}^sQ_{k,n}^{(\alpha,\beta)}(x)
  Q_{k,n}^{(\alpha,\beta)}(y)\sqrt{w_n^{(\alpha,\beta)}(x)w_n^{(\alpha,\beta)}(y)},
  \qquad x,y\in[0,n],
\end{equation}
that is the Christoffel--Darboux kernel for the Hahn polynomials
\eqref{eq:orthoQ}.

Let us now focus on the fluctuations of the portion $\Gamma_1$ of the
Arctic curve.  Our starting point is the expression for EFP, as given
in Prop.~\ref{prop:EFP}. Let us consider values of $p$ within the
interval $[N,p_{\mathrm{max}}]$, for some $p_{\mathrm{max}}<\lfloor
(M-L)(L-N)/(M-L+N)\rfloor$. The lower bound has been chosen for
simplicity, while the upper bound is such that, when setting
$p=\lfloor x \ell \rfloor$, in the scaling limit one has $x\leq
x_{\mathrm{max}}< \xc$, see \eqref{eq:xc}, and the considered portion
of $\Gamma_1$ is thus away from the contact point. The expression for
EFP simplifies to
\begin{equation}\label{eq:efp_simple}
  F_{L,M,N}(p,q)=H(M-p-q,p-N,L-N-p,N,M-L+N),
\end{equation}
with $p\in[N,p_{\mathrm{max}}]$. Let us fix a value of $p$, and denote
by $\xi$ the value of the topmost thick edge between the $p$th and
$(p+1)$th vertical lines of the $L\times M$ lattice.  It follows from
the definition of EFP that
\begin{equation}
\mathbb{P}(\xi< M-q)=F_{L,M,N}(p,q).
\end{equation}
Recalling now \eqref{eq:FD1} and \eqref{eq:efp_simple}, we 
 may write
\begin{equation}
  \mathbb{P}(\xi <M-q )=\det[1-K_{M-L+N,N}|_{(M-p-q,M-L+N]}],
\end{equation}
with parameters $\alpha$ and $\beta$ of the Hahn measure specialized
to $\alpha=p-N$ and $\beta=L-N-p$.

We may now consider values of $M-q$ in the vicinity of the portion
$\Gamma_1$ of the arctic curve, as given by Thm.~\ref{thm:result}.  In
the non-intersecting lattice path picture, see
Sect.~\ref{sec:bijection}, this amounts to consider values $M-q$ in
the vicinity of $p+\ell R_0$, with $R_0:=R(\alpha_0,\beta_0,s_0,n_0)$
given by \eqref{right-end-point}. It has been conjectured in
\cite{J-00}, and proven in \cite{BKMM-07}, see Thm.~3.14 therein,
that, in such regime, the Christoffel--Darboux kernel for the Hahn measure
tends to the Airy kernel, in the scaling limit. Specifically, we have,
for some constant $t>0$,
\begin{equation}
\lim_{\ell\to\infty}\mathbb{P}\left(\frac{\xi-p-\ell
  R(\alpha_0,\beta_0,s_0,n_0)}{(t\ell)^{1/3}}\leq x\right)=
\det[1-A|_{(x,\infty)}],
\end{equation}
where the integral operator $A|_{(x,\infty)}$ acts on $L^2[x,\infty)$
with the Airy kernel. In other words, the fluctuations of the
considered portion of $\Gamma_1$ are indeed governed by the
Tracy-Widom distribution. 

One may proceed similarly for the fluctuations of the portion
$\Gamma_2$ of the Arctic curve. However, on the one hand, the
expression for AFP is more intricate, and, on the other hand, before
applying the procedure carried out above for EFP, one should use
particle-hole duality (in the sense of discrete log-gases on finite
intervals, see Sect.~3.2 in \cite{BKMM-07}).  This makes the
discussion slightly more involved, but leads to the conclusion that
fluctuations of the portion $\Gamma_2$ of the Arctic curve, away from
the contact points, are again governed by the Tracy--Widom
distribution.

To conclude, we comment that Airy type density fluctuations in the
vicinity of $\Gamma_1$ and $\Gamma_4$ are indeed clearly visible in
Fig.~\ref{fig:numerics}, left, and similarly, for $\Gamma_3$ and
$\Gamma_6$, in Fig.~\ref{fig:numerics}, center. However the phenomenon
is not as apparent for $\Gamma_2$ nor $\Gamma_6$, in
Fig.~\ref{fig:numerics}, right. This is due to the fact that
oscillations in the densities of $c$-vertices of type 3 or 4, see
Fig.~\ref{fig:weights}, are in antiphase, and cancel out when
considering the total density of $c$ vertices.
To counter this problem, following \cite{KL-17}, we plot the
difference between the two densities associated to the two types of
$c$-vertices, see Fig.~\ref{fig:num_c1-c2}. Now Airy type density
fluctuations in the vicinity of $\Gamma_2$ and $\Gamma_5$ become
indeed clearly visible.

\begin{figure}[t]
\includegraphics[width=.4\textwidth]{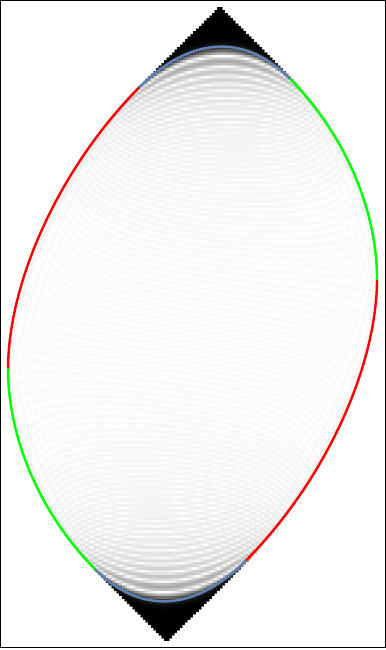}
\caption{Absolute value of the difference of densities of $c$-vertices
  (that is of vertices of the third and fourth type, in Fig.~1), with
  color varying from white to black as densities vary from 0 to 1. The
  picture has been generated by averaging over $10^5$ configurations
  of the lattice with $N=60$, $L=140$, $M=240$.}
\label{fig:num_c1-c2}
\end{figure}


\subsection*{Acknowledgments}
We are indebted to N.M. Bogoliubov, L. Cantini, I. Prause, D. Serban,
A. Sportiello, J.-M. St\'ephan, J. Viti, for stimulating
discussions at various stages of this work.  FC and AM are grateful to
the Workshop on `Randomness, Integrability, and Universality', held on
Spring 2022 at the Galileo Galilei Institute for Theoretical Physics,
for hospitality and support at some stage of this work. AGP is
grateful to INFN, Sezione di Firenze, for hospitality and partial
support during his stay in Florence, Italy, where a part of this work
was done.  FC acknowledges partial support from Italian Ministry of
Education, University and Research under the grant PRIN 2017E44HRF on
`Low-dimensional quantum systems: theory, experiments and
simulations'. AGP acknowledges support from the Russian Science
Foundation, grant 21-11-00141.

\appendix

\section{Boundary-refined partition  function}\label{app:boundary}

We want here to evaluate the boundary-refined partition function. We
shall resort to the bijection between configurations of the
four-vertex model and non-intersecting lattice paths introduced in
Sec.~\ref{sec:bijection}.

As a warm-up up, let us evaluate first the partition function
$Z_{L,M,N}$. This is the number of configurations of $N$ lattice paths
constrained by the four-vertex rules, with the the $j$th path,
$j=1,\dots,N$, counting from the left, connecting the two vertices of
coordinates $(j,1)$ and $(L-N+j,M)$, respectively, see
Fig.~\ref{fig:NILP}, left.  Such configurations are equinumerous with
those of $N$ non-intersecting lattice paths connecting the vertices
$(j,1)$ and $(L-N+j,K)$, with $K=M-L+N+1$, see Fig.~\ref{fig:NILP},
right. Using the Linstr\"om-Gessel-Viennot lemma \cite{L-73,GV-85}, we
thus have:
\begin{equation}\label{eq:Z=det}
Z_{L,M,N}=\det_{1\leq i,j \leq N}\left[\binom{M+i-j}{L-N+i-j}\right].
\end{equation}
Determinants of matrices of binomials as above may be evaluated
explicitly, see \cite{K-01}, Thm.~26, yielding \eqref{eq:Z=B}.

Let us turn now to the boundary-refined partition function
$Z_{L,M,N}^{(1)}(n)$.  Recall that in the four-vertex model
description, on the original rectangular lattice, we require the
presence of a vertex of type $b$ at position $(n,m)$, with
$m=n+M-L+N-1$, see Fig.~\ref{fig:boundary-4vm-NILP}, left.  This means
that the leftmost path, starting from vertex $(1,N)$, is constrained
to reach the vertex $(n,m-1)$. As for the remaining paths, the $j$th
one, $j=2,\dots,N$, must connect the vertices of coordinates
$(j,N-j+1)$ and $(L-N+j,M-1)$.

In terms of non-intersecting lattice paths, the above conditions
translate as follows: the $j$th path must connect the vertices of
coordinates $(j,N-j+1)$ and $(x_j,K-1)$,  where $K=M-L+N+1$, and
\begin{equation}\label{eq:xjH}
  x_j=
  \begin{cases}
    n, & \quad j=1,\\
    L-N+j, & \quad j=2,\dots,N,
  \end{cases}
\end{equation}
see Fig.~\ref{fig:boundary-4vm-NILP}, right.  Application of
Linstr\"om-Gessel-Viennot lemma directly yields
\begin{equation}\label{eq:Zref2} 
	Z_{L,M,N}^{(1)}(n) = \det_{1\leq i,j\leq N}
			\binom{K-N-2 + x_j }{x_j-i}.
\end{equation}
The determinant of binomials in \eqref{eq:Zref2} can be evaluated
explicitly:
\begin{equation}
	Z_{L,M,N}^{(1)}(n) =
	\prod_{1\le i < j \le N} (x_j-x_i)
	\prod_{j=1}^{N}
	\frac{ (K-2 + x_j - N)! }
			 { (x_j - 1)! (K-1-j)! }.
\end{equation}
Taking into account \eqref{eq:xjH}, we find that
\begin{multline}
Z_{L,M,N}^{(1)}	(n) = 
\frac{(M-L-1+n)! (L-n)!}{(n-1)!(L-N-n+1)!}
\\\times 
\frac{1}{(M-L+N-1)!}
\prod_{2\leq i<j\le N} (j-i) 
\prod_{j=2}^{N} \frac{(M-N-1+j)!}{ (L-N+j-1)!(M-L+N-j)! } .
\end{multline}
Simplifying further we easily  recover  \eqref{eq:Zrefined}.

\section{Column-refined partition function}\label{app:column-Z}

Let us introduce here a quantity that, although somewhat intermediate,
and formulated for the model of non-intersecting lattice paths,
appears useful to evaluate the probability of occurrence of a variety
of configurations in the four-vertex model.

Let us consider $N$ non-intersecting lattice paths on the $L\times K$
lattice, with our usual boundary conditions. In the spirit
of \cite{CDP-21}, consider the partition function of the model
when the paths are conditioned to flow through the $N$ horizontal
edges lying in the $n$th column (i.e., between the $n$th and $(n+1)$th
vertical lines), with ordinates $\mathbf{y}=\{y_1,\dots,y_N\}$, such
that $1\leq y_1<\dots<y_N\leq K$.  We denote such partition function
by $Z_{L,M,N}(n,\mathbf{y})$.

Note that we have reversed the labelling of paths, with respect to
App.~\ref{app:boundary}. Also, it is convenient to use the freedom
discussed in Sec.~\ref{sec:bijection} to slightly change the boundary
condition in such a way to have exactly $N$ horizontal thick edges in
each column, including the first $N$ ones, and the last $N$ ones. We
therefore choose as endpoints of the $j$th path, $j=1,\dots,N$,
counting from the right, the vertices of coordinates $(1,j)$ and
$(L,K-N+j)$, respectively, see Fig.~\ref{fig:MPCF-WF}.

\begin{proposition}\label{prop:Z_LMN(n,y)}
  The number of configurations of $N$ non-intersecting lattice paths,
  on the $L\times K$ lattice, with endpoints as mentioned above, and 
  conditioned to flow through the horizontal edges at positions
  $(n,y_1)$, \dots, $(n,y_N)$, is:
\begin{equation}\label{eq:prop_Zy1}
  Z_{L,M,N}(n,\mathbf{y})=C_{L,M,N;n}
  \prod_{j=1}^{\tilde{n}}\delta_{y_j,j}\prod_{j=\tilde{l}+1}^{N}\delta_{y_j,K-N+j}
  \prod_{\tilde{n}< i<j\leq\tilde{l}}(y_j-y_i)^2\prod_{j=\tilde{n}+1}^{\tilde{l}}\mu(y_j),
\end{equation}
where
\begin{equation}\label{eq:1p-measure}
  \mu(y)=\binom{y-N+n+\tilde{n}-1}{y-\tilde{n}-1}
  \binom{K+L-n-\tilde{l}-y}{K-N+\tilde{l}-y}
\end{equation}
and 
\begin{multline}
  C_{L,M,N;n}=
  \prod_{i=1}^{\tilde{n}}\frac{(i-1)!(K+L-N-n-i)!}{(K-N+\tilde{l}-i)!(L-n-i)!}
  \\
  \times \prod_{i=\tilde{n}+1}^{\tilde{l}}
  \frac{(2\tilde{n}-N+n)!(L+N-n-2\tilde{l})!}{(i-N+n-1)!(L-n-i)!}
   \\
  \times
  \prod_{i=\tilde{l}+1}^{N}
  \frac{(N-i)!(K-2N+n+i-1)!}{(K-N+\tilde{n}+i-1)!(i-N+n-1)!}
  \label{eq:prop_Zy2},
\end{multline}
with $\tilde{n}=\max(N-n,0)$, and
$\tilde{l}=\min(L-n,N)$.
\end{proposition}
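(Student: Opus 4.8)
The plan is to evaluate $Z_{L,M,N}(n,\mathbf{y})$ as a product of two Lindström–Gessel–Viennot determinants, one for the left half of the lattice (columns $1$ through $n$) and one for the right half (columns $n$ through $L$), glued along the prescribed positions $\mathbf{y}=\{y_1,\dots,y_N\}$ of the $N$ horizontal edges in the $n$th column. Indeed, conditioning the $N$ non-intersecting paths to pass through the edges $(n,y_1),\dots,(n,y_N)$ factorizes the counting: the number of configurations on the full $L\times K$ lattice equals the number of ways to route $N$ non-intersecting paths from the left endpoints $(1,j)$ to the intermediate points $(n,y_j)$, times the number of ways to route them from $(n,y_j)$ to the right endpoints $(L,K-N+j)$. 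First I would write each factor as a determinant of binomial coefficients via the LGV lemma, exactly as in App.~\ref{app:boundary}. The binomial entry for a single path from height $a$ in column $1$ to height $y_j$ in column $n$ counts directed lattice paths of the appropriate horizontal and vertical displacement, and similarly for the right factor.

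The key technical step is then the evaluation of the product of these two determinants. Each is a determinant whose $(i,j)$ entry is a binomial in $y_j$; such determinants can be factored as a Vandermonde-type product $\prod_{i<j}(y_j-y_i)$ times a product of factorials, precisely by the same binomial-determinant identity (\cite{K-01}, Thm.~26) already invoked for $Z_{L,M,N}$ and $Z_{L,M,N}^{(1)}(n)$. Multiplying the two, the two Vandermonde factors combine into $\prod_{i<j}(y_j-y_i)^2$, which is the log-gas interaction visible in \eqref{eq:prop_Zy1}, and the residual factorial prefactors assemble into the single-particle measure $\mu(y)$ in \eqref{eq:1p-measure} together with the overall constant $C_{L,M,N;n}$ in \eqref{eq:prop_Zy2}.

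The main obstacle, and the reason for the $\delta$-constraints and the cutoffs $\tilde{n}=\max(N-n,0)$, $\tilde{l}=\min(L-n,N)$, is the boundary combinatorics. When $n$ is small (close to the left edge) or large (close to the right edge), some of the horizontal edges in the $n$th column are \emph{forced} by the non-intersection constraint and the frozen corners to sit at specific heights: the lowest $\tilde{n}$ paths are pinned at $y_j=j$ and the highest $N-\tilde{l}$ are pinned at $y_j=K-N+j$. These frozen paths contribute the Kronecker deltas $\prod_{j=1}^{\tilde{n}}\delta_{y_j,j}$ and $\prod_{j=\tilde{l}+1}^{N}\delta_{y_j,K-N+j}$, and the genuine log-gas lives only on the remaining indices $\tilde{n}<j\le\tilde{l}$. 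I would handle this by treating the three ranges of $j$ separately when expanding the determinants, so that the pinned rows/columns can be extracted, leaving a smaller $(\tilde{l}-\tilde{n})\times(\tilde{l}-\tilde{n})$ Vandermonde block; keeping careful track of the factorials shed in this extraction is exactly what produces the three-factor structure of $C_{L,M,N;n}$.

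Finally, I would verify the result against the known special cases: summing $Z_{L,M,N}(n,\mathbf{y})$ over all admissible $\mathbf{y}$ must return $Z_{L,M,N}$ of Prop.~\ref{prop:Z}, and the $n=L-N+1$ specialization (one free edge) should reduce to the boundary-refined partition function $Z_{L,M,N}^{(1)}$ of Prop.~\ref{prop:H}. These consistency checks pin down the normalization $C_{L,M,N;n}$ unambiguously and guard against the bookkeeping errors that the piecewise boundary analysis is prone to.
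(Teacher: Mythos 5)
Your proposal follows essentially the same route as the paper's proof: cut the lattice along the $n$th column, express the two halves as Lindstr\"om--Gessel--Viennot determinants of binomials (the paper calls them wave functions $\Psi_1$, $\Psi_2$), evaluate each via the Krattenthaler binomial-determinant identity into a Vandermonde times factorials, and absorb the boundary pinning for small or large $n$ into Kronecker deltas before multiplying the two factors. The only differences are cosmetic (the paper detects the pinned edges through the appearance of negative factorials rather than by extracting rows up front, and your closing consistency checks are a sensible addition not present in the paper).
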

\begin{proof}
In order to evaluate $Z_{L,M,N}(n,\mathbf{y})$, let us split the
lattice into two parts, by `cutting' all horizontal edges between the
$n$th and $(n+1)$th lines.  It is natural to introduce two `wave
functions' $\Psi_1(\mathbf{y})$ and $\Psi_2(\mathbf{y})$, that, for a
given configuration of thick horizontal edges, are the partition
functions of the west and east portions of the split lattice,
respectively, see Fig.~\ref{fig:MPCF-WF}.

\begin{figure}[t]
\includegraphics{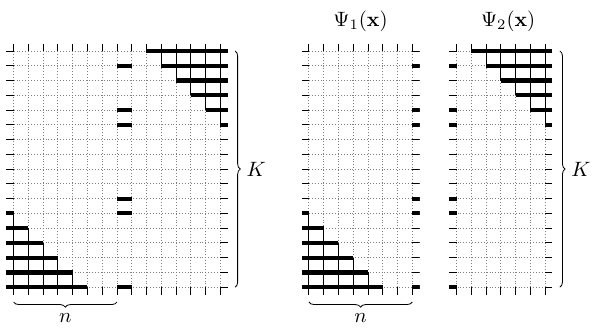}
\caption{The column-refined partition function
  $Z_{L,M,N}(n,\mathbf{y})$ (left), and the two wave functions,
  $\psi_1(\mathbf{y})$ and $\psi_2(\mathbf{y})$ (right).  Here $L=15$,
  $K=17$, $N=6$, $n=8$, and $\mathbf{y}=(1,6,7,12,13,16)$.}
\label{fig:MPCF-WF}
\end{figure}

The two wave functions are easily calculated. Indeed
$\Psi_1(\mathbf{y})$ is the number of configurations of $N$
non-intersecting lattice paths connecting the vertices of coordinates
$(1,j)$ and $(n,y_j)$, $j=1,\dots,N$. Similarly for
$\Psi_2(\mathbf{y})$, but now with the vertices of coordinates
$(n+1,y_j)$ and $(L,K-N+j)$, $j=1,\dots,N$. Application of the
Lindstr\"om-Gessel-Viennot lemma yields
\begin{align}
  \Psi_1(\mathbf{y})&=\det_{1\leq i,j\leq N}
  \left[\binom{y_i-j+n-1}{y_i-j}\right],
  \\
  \Psi_2(\mathbf{y})&=\det_{1\leq i,j\leq N}
  \left[\binom{K-N+L-n-y_i+j-1}{K-N-y_i+j}\right].
\end{align}
These determinants evaluate to:
\begin{align}\label{eq:1p-psi1}
  \Psi_1(\mathbf{y})&=\prod_{1\leq i <j\leq N}(y_j-y_i)
  \prod_{i=1}^N\frac{(y_i-N+n-1)!}{(y_i-1)!(i-N+n-1)!},
  \\
  \Psi_2(\mathbf{y})&=\prod_{1\leq i <j\leq N}(y_j-y_i)
  \prod_{i=1}^N\frac{(K+L-N-n-y_i)!}{(K-y_i)!(L-n-i)!},\label{eq:1p-psi2}
\end{align}
where, once more, we have used \cite{K-01}, Thm.~26.

Note that, strictly speaking, the expression \eqref{eq:1p-psi1} for
$\Psi_1(\mathbf{y})$ is well defined only when $n\geq N$, being
otherwise plagued with negative factorials. This corresponds to the
fact that, when $n<N$, the $N-n$ lowest thick edges are constrained to
$y_j=j$, $j=1,\dots,N-n$.  Implementing such constraint, the negative
factorials indeed cancel out, as it may be easily seen, e.g.,
interpreting them in terms of some suitable limit of corresponding
Gamma functions. We obtain
\begin{equation}\label{eq:1p-psi21}
  \Psi_1(\mathbf{y})=\prod_{i=1}^{\tilde{n}}\delta_{y_i,i}
  \prod_{\tilde{n}+1\leq i<j\leq N}(y_j-y_i)
  \prod_{i=\tilde{n}+1}^N\frac{(y_i-N+n-1)!}{(y_i-\tilde{n}-1)!(i-N+n-1)!},
\end{equation}
where $\tilde{n}:=\max(N-n,0)$, and we follow the usual
convention that empty products, such as the first one when $n>N$,
equal $1$. Representation \eqref{eq:1p-psi21} for the wave function
$\Psi_1(\mathbf{y})$ holds for $1\leq n\leq L$.

Similarly, the expression \eqref{eq:1p-psi2} for $\Psi_2(\mathbf{y})$,
becomes ill-defined for $n>L-N$, unless the highest thick edges are
constrained to $y_j=K-N+j$, $j=L-n+1,\dots,N$. Proceeding as above, we
get
\begin{equation}\label{eq:1p-psi22}
   \Psi_2(\mathbf{y})=\prod_{1\leq i<j\leq\tilde{l}}(y_j-y_i)
  \prod_{i=1}^{\tilde{l}}\frac{(K+L-N-n-y_i)!}{(K-N+\tilde{l}-y_i)!(L-N-i)!}
  \prod_{i=\tilde{l}+1}^N\delta_{y_i,K-N+i},
\end{equation}
where $\tilde{l}:=\min(L-n,N)$. Representation \eqref{eq:1p-psi22} for
the wave function $\Psi_2(\mathbf{y})$ holds for $1\leq n\leq
L$. Recalling that the column-refined partition function
$Z_{L,M,N}(n,\mathbf{y})$ is simply the product of
$\Psi_1(\mathbf{y})$ and $\Psi_2(\mathbf{y})$, and implementing the
Kronecker deltas, setting $y_i=i$, $i=1,\dots,\tilde{n}$ in
$\Psi_2(\mathbf{y})$, and $y_i=K-N+i$, $i=\tilde{l}+1,\dots,N$ in
$\Psi_1(\mathbf{y})$, we recover \eqref{eq:prop_Zy1}.
\end{proof}

\begin{remark}
If we restrict $n$ to the range $N\leq n\leq L-N$, expression
\eqref{eq:prop_Zy1} simplifies significantly:
\begin{multline}
  Z_{L,M,N}(n,\mathbf{y})=C_{L,N;n} \prod_{1\leq i <j\leq
    N}(y_j-y_i)^2
  \\
  \times \prod_{i=1}^N \binom{y_i-N+n-1}{y_i-1}
  \binom{K+L-N-n-y_i}{K-y_i},
\end{multline}
with
\begin{equation}
  C_{L,N;n}=
  \prod_{i=1}^N\frac{(n-N)!(L-N-n)!}{(n-i)!(L-n-i)!}.
\end{equation}
\end{remark}
 
Prop.~\ref{prop:Z_LMN(n,y)} allows for evaluation of various
quantities of interest for the study of the four-vertex model.

\section{Proof of Propositions \ref{prop:EFP}, \ref{prop:EFP_AFP},
  and \ref{prop:AFP}}\label{app:EFP_AFP}

Let us proceed with the proof of Prop.~\ref{prop:EFP}.
\begin{proof}
To start with, we observe that the condition entering the definition
of EFP, namely that all vertices in $\Theta$ should be of type $a$,
may be rephrased in the rectangular domain as the requirement that
there is no path flowing through the topmost $q$ consecutive
horizontal edges, in the $p$th column. This can in turn be
equivalently rephrased in terms of non-intersecting lattice paths
along the lines of Sect.~\ref{sec:bijection}, as follows. On the
$L\times K$ lattice, there is no path flowing through the topmost
$\tilde{q}$ horizontal edges, in the $p$th column, with
$\tilde{q}:=p+q-L+N$, see Fig.~\ref{fig:EFPandAFP-NILP-constraints}.
We may therefore write
\begin{equation}
   	F_{L,M,N}(p,q)=\frac{1}{Z_{L,M,N}}\sum_{1\leq \mathbf{y}\leq
          K-\tilde{q}} Z_{L,M,N}(n,\mathbf{y}),
\end{equation}
where the column-refined partition function $Z_{L,M,N}(n,\mathbf{y})$ has
been defined in App.~\ref{app:column-Z}. 

\begin{figure}[t]
\includegraphics{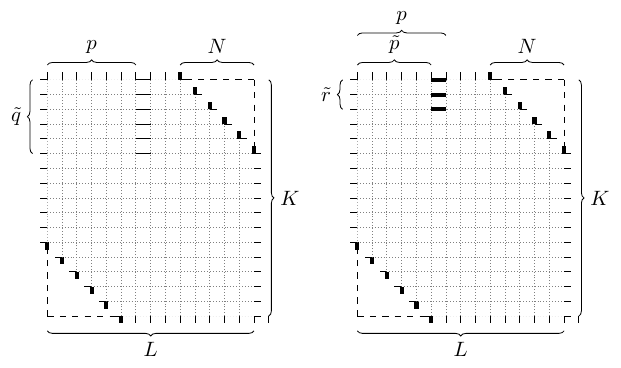}
\caption{Constraints corresponding to the EFP (left) and AFP (right)
  in terms of non-intersecting lattice path. Here $L=15$, $M=25$,
  $N=6$, $p=7$, $q=8$, and therefore $K=17$, $\tilde{q}=6$, $\tilde{p}=6$,
  $\tilde{r}=3$.}
\label{fig:EFPandAFP-NILP-constraints}
\end{figure}

We now resort to Prop.~\ref{prop:Z_LMN(n,y)}. With no loss of
generality, we may restrict to $p\leq L-N$, see \eqref{eq:EFP-cond2},
that is, $\tilde{l}=N$, and set $\tilde{n}=\max(N-p,0)$. We get
\begin{equation}
  F_{L,M,N}(p,q)=\frac{C_{L,M,N;p}}{Z_{L,M,N}}\sum_{\tilde{n}<\mathbf{y}\leq
    K-\tilde{q}} \,\prod_{\tilde{n}<i<j\leq
    N}(y_j-y_i)^2\prod_{i=\tilde{n}+1}^N \tilde{\mu}(y_i),
\end{equation}
where $\tilde{\mu}(y)$ is given by $\mu(y)$ in \eqref{eq:1p-measure},
but with $n$ and $\tilde{l}$ replaced by $p$ and $N$, respectively.
Relabelling and shifting the variables, $y_i\to
x_{i+\tilde{n}}+\tilde{n}+1$, and observing that
\begin{equation}
  \tilde{\mu}(x+\tilde{n}+1)=w_{K-\tilde{n}-1}^{(p-N+2\tilde{n},L-N-p)}(x),
  \end{equation}
we finally get 
\begin{align}
  &F_{L,M,N}(p,q)=\\
  &=\frac{C_{L,M,N;p}}{Z_{L,M,N}}\sum_{0\leq
    \mathbf{y}\leq K-\tilde{q}-\tilde{n}-1} \,\prod_{1\leq i<j\leq
    N-\tilde{n}}(y_j-y_i)^2 \prod_{i=1}^{N-\tilde{n}}
  w_{K-\tilde{n}-1}^{(p-N+2\tilde{n},L-N-p)}(x_i)
  \\ &=H(K-\tilde{n}-\tilde{q}-1,p-N+2\tilde{n},
  L-N-p,N-\tilde{n},K-\tilde{n}-1),
\end{align}
where the function $H(\cdot)$ has been defined in
\eqref{eq:H=def}. Recalling that $\tilde{q}=p+q-L+N$, $K=M-L+N-1$, we
recover Prop.~\ref{prop:EFP}.
\end{proof}

Let us now turn to the proof of Prop.~\ref{prop:EFP_AFP}.

\begin{proof}
We want to show that AFP may be expressed in terms of EFP, via a
suitable identification of the parameters.  We shall resort once again
to the bijection between configurations of the four-vertex model and
non-intersecting lattice paths.

As already observed, the definition of EFP, namely that, in the
hexagonal domain, the topmost $\tilde{q}=p+q-L+N$ vertices in the
$p$th vertical line are all of type $a$, once rephrased in terms of
non-intersecting lattice paths, reads: on the $L\times K$ lattice, no
path flows through the topmost $\tilde{q}$ horizontal edges, in the
$p$th column.  Similarly, the definition of AFP implies the flow of a
path through each of the $\tilde{r}$ topmost horizontal edges, in the
$\tilde{p}$th column, where
\begin{equation}
  \tilde{p} =p -\mathrm{mod}(\tilde{q}+1,2),\qquad
  \tilde{r}=\bigg\lfloor\frac{\tilde{q}+1}{2}\bigg\rfloor.
\end{equation}
Indeed, the requirement that, in the hexagonal domain, the topmost
$\tilde{q}$ vertices in the $p$th vertical line are all of type $c$,
implies having a path flowing through each of the topmost $\tilde{r}$
horizontal edges to the right (and left) of the $p$th line when
$\tilde{q}$ is odd (even), see
Fig.~\ref{fig:EFPandAFP-NILP-constraints}.

\begin{figure}[t]
\includegraphics[width=.9\textwidth]{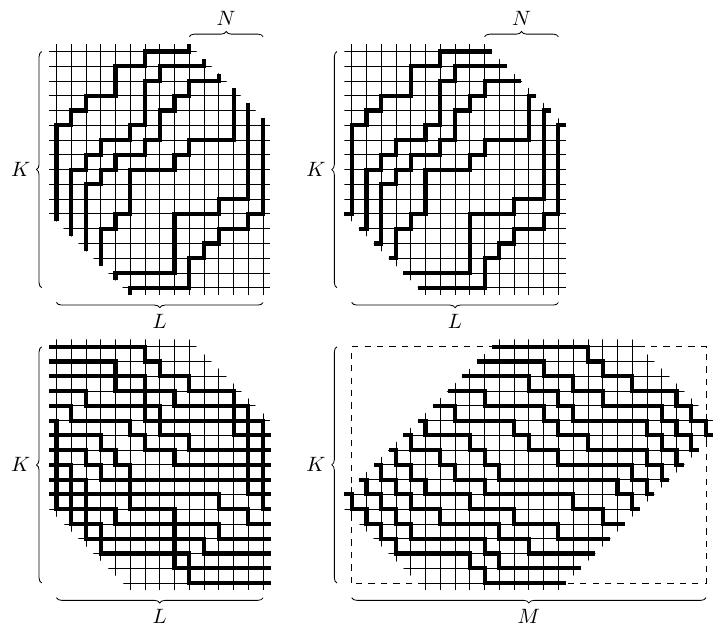}
\caption{An example of the procedure used to relate EFP and AFP.
  Starting from a give configuration of non-intersecting lattice
  paths, with frozen corners removed (top left), swap the states of
  all edges in the frozen triangular corners (top right), swap the
  state of all horizontal edges (bottom left), and shift all paths
  eastward, obtaining a four-vertex model configuration on a rotated
  $K\times M$ lattice with $K-N$ lines (bottom right). Here $L=15$,
  $K=17$, $N=6$, and hence $M=25$.}\label{fig:AFP-bijection}
\end{figure}

In order to express AFP in terms of EFP, we consider another bijection
between configurations of the four-vertex model and non-intersecting
lattice paths. For any given configuration of $N$ non-intersecting
paths on the $L\times K$ rectangular domain, let us implement the
following procedure \textit{i)} swap the state (full/empty) of each
edge in the northeast and southwest frozen triangular corners; in
other words, in these two corners, paths are transformed from all
vertical to all horizontal; this action does not change the number of
allowed configurations; \textit{ii)} swap the state of all horizontal
edges; as a result, we have $K-N$ lattice paths; these may now
osculate, but we assume they do not intersect, so that each path is
still uniquely defined; \textit{iii)} shift the $i$th path,
$i=1,\dots,K-N$, enumerated from the top, by $K-N-i$ steps to the
right. Note that, with the last step, the horizontal size of domain
becomes $L+K-N-1=M$. An example of this procedure is given in
Fig.~\ref{fig:AFP-bijection}.

All steps of the above procedure being invertible, it follows that
each configuration of $N$ non-intersecting paths on the $L\times K$
rectangular domain is bijectively mapped into a configuration of the
four-vertex model on the $K\times M$ domain (rotated by $\pi/2$), with
$K-N$ lines.

\begin{figure}[t]
\includegraphics[width=\linewidth]{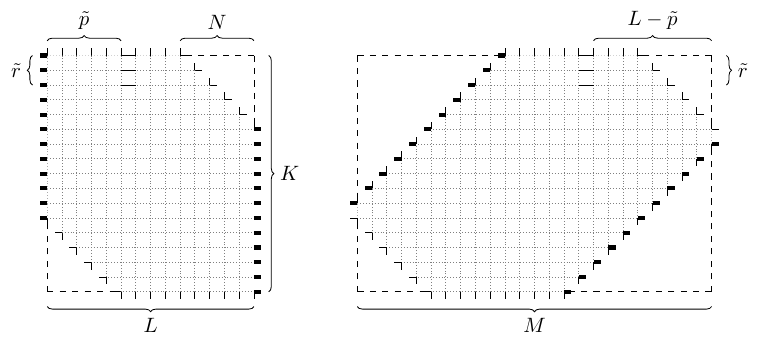}
\caption{The AFP-constraint with inversed horizontal edges (left) and
  equivalent EFP-constraint (right).}\label{fig:AFP-bijection-BC}
\end{figure}

Note that under this bijection the condition defining the AFP, namely
the flow of a path through each of the $\tilde{r}$ topmost horizontal
edges, in the $\tilde{p}$th column, becomes the condition of having
only vertices of type $a$ within a rectangular region of size
$\tilde{r}\times (L-\tilde{p})$ in the northeast corner of the
$K\times M$ domain with $N-K$ lines, see
Fig.~\ref{fig:AFP-bijection-BC}. In formulae:
\begin{equation}
  G_{L,M,N}(p,q)=F_{K,M,K-N}(\tilde{r},L-\tilde{p}).
\end{equation}
Restoring the dependence from the original parameters, we recover
Prop.~\ref{prop:EFP_AFP}.
\end{proof}

Finally, let us derive Prop.~\ref{prop:AFP}.

\begin{proof}
 It follows from Prop.~\ref{prop:EFP} and Prop.~\ref{prop:EFP_AFP} that
  AFP may be written according to \eqref{eq:G(p,q)=}, with parameters
\begin{gather}
	d = \min(\tilde{r},M-L+1)-\tilde{r}+\tilde{p}-1,  \qquad 
	\alpha = \abs{M-L-\tilde{r}+1}, \qquad
	\beta = N-\tilde{r}, \\
	s =  \min(\tilde{r},M-L+1),\qquad
	n = L-N+\min(\tilde{r},M-L+1)-1.
\end{gather}
Noting that $2\tilde{r}-\tilde{p}=\tilde{q}-p+1$, implying $
L-N+\tilde{r}-\tilde{p}=q+1-\tilde{r}$, we recover expression
\eqref{eq:G(p,q)parameters=} for the parameters.
\end{proof}

\bibliography{4vmAC_bib.bib}

\end{document}